\newcommand{\intd}{\, \textnormal{d}}
\newcommand{\ud}{\textnormal{d}}
\newcommand{\Tr}{\textnormal{Tr}}
\newcommand{\sgn}{\textnormal{sign}}
\newcommand{\mb}[1]{\mathbf{#1}}
\newcommand{\mbb}[1]{\mathbb{#1}}
\newcommand{\mcl}[1]{\mathcal{#1}}
\newcommand{\mnm}[1]{\textnormal{#1}}
\newcommand{\mfk}[1]{\mathfrak{#1}}
\newcommand{\balpha}{\boldsymbol{\alpha}}
\newcommand{\bxi}{\boldsymbol{\xi}}
\newcommand{\bzeta}{\boldsymbol{\zeta}}
\newcommand{\bx}{\mathbf{x}}
\newcommand{\by}{\mathbf{y}}
\newcommand{\bz}{\mathbf{z}}
\newcommand{\bX}{\mathbf{X}}
\newtheorem{theorem}{Theorem}
\newtheorem{lemma}[theorem]{Lemma}
\newtheorem{definition}[theorem]{Definition}
\begin{document}

\title[Dunkl jump processes]{Dunkl jump processes:\\relaxation and a phase transition}
\author{Sergio Andraus}
\address{Department of Physics, Faculty of Science and Engineering, Chuo University, Kasuga 1-13-27, Bunkyo-ku, Tokyo 112-8551, Japan}
\email[S. Andraus]{andraus@phys.chuo-u.ac.jp}
\date{\today}
\keywords{Stochastic processes, jump processes, phase transition, relaxation dynamics}

\begin{abstract}
Dunkl processes are multidimensional Markov processes defined through the use of Dunkl operators. Their paths show discontinuities, and so they can be separated into their continuous (radial) part, and their discontinuous (jump) part. While radial Dunkl processes have been studied thoroughly due to their relationship with families of stochastic particle systems such as the Dyson model and Wishart-Laguerre processes, Dunkl jump processes have gone largely unnoticed after the initial work of Gallardo, Yor and Chybiryakov. We study the dynamical properties of the latter processes, and we derive their master equation. By calculating the asymptotic behavior of their total jump rate, we find that the jump processes of types $A_{N-1}$ and $B_N$ undergo a phase transition when the parameter $\beta$ decreases toward one in the bulk scaling limit. In addition, we show that the relaxation behavior of these processes is given by a non-trivial power law, and we derive an asymptotic relation for the relaxation exponent in order to discuss its $\beta$-dependence.
\end{abstract}

\maketitle

\section{Introduction and Main Results}

Dunkl processes \cite{RoslerVoit98, RoslerVoit08} are a family of multidimensional stochastic processes defined as a generalization of Brownian motion using Dunkl operators \cite{Dunkl89}. The latter are differential-difference operators which depend on the choice of a root system $R$, which is a finite set of vectors which generates reflection groups, and a set of parameters, called multiplicities. In order to illustrate the main characteristics of a Dunkl process, we take the root system of type $A_{N-1}$ as an example. Consider a group of $N$ Brownian particles on the real line; then, every configuration can be represented as a vector in $\mbb{R}^N$. If we denote the transition probability density (TPD) of the process going from the configuration $\bx\in\mbb{R}^N$ to the configuration $\by\in\mbb{R}^N$ in a time $t>0$ by $p_{A}(t,\by|\bx)$, then the backward Fokker-Planck equation of the Dunkl process reads
\begin{align}
\frac{\partial}{\partial t}p_{A}(t,\by|\bx)&=\frac{1}{2}\sum_{i=1}^N\frac{\partial^2}{\partial x^2}p_{A}(t,\by|\bx)\notag\\
&\quad+\frac{\beta}{2}\sum_{1\leq i\neq j\leq N}\Big[\frac{1}{x_i-x_j}\frac{\partial}{\partial x_i}-\frac{1-\sigma_{ij}}{2(x_i-x_j)^2}\Big]p_{A}(t,\by|\bx).\label{eq:DunklATPD}
\end{align}
The operator $\sigma_{ij}$ permutes the $i$-th and $j$-th components of the vector $\bx$, and $\beta/2>0$ is the sole multiplicity in this case. There are two main things to note about this process: the first is that, if the difference term (second term in brackets on the rhs) were not a part of this equation, the process would be equivalent to the Dyson Brownian motion model \cite{Dyson62} with parameter $\beta$. This fact is well-known \cite{DemniBook}, and in general for every Dunkl process there is a continuous version of it, called a \emph{radial} Dunkl process \cite{GallardoYor05}. The second is that the difference term introduces discontinuities to the process. This paper is focused on the discontinuous part, the Dunkl jump process.

\begin{figure}
\[
\begin{array}{cc}
\includegraphics[width=0.55\textwidth]{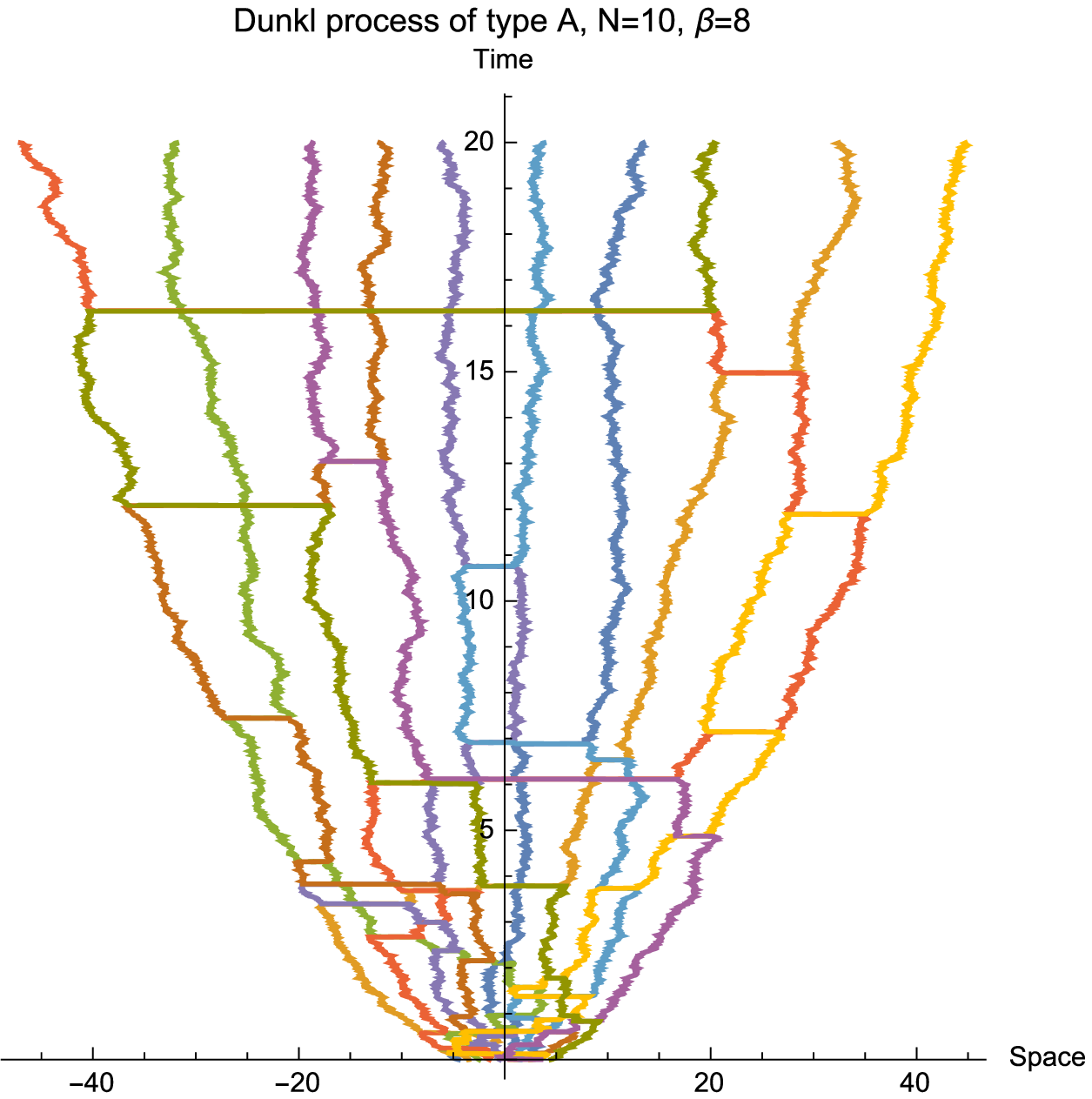}&\includegraphics[width=0.331\textwidth]{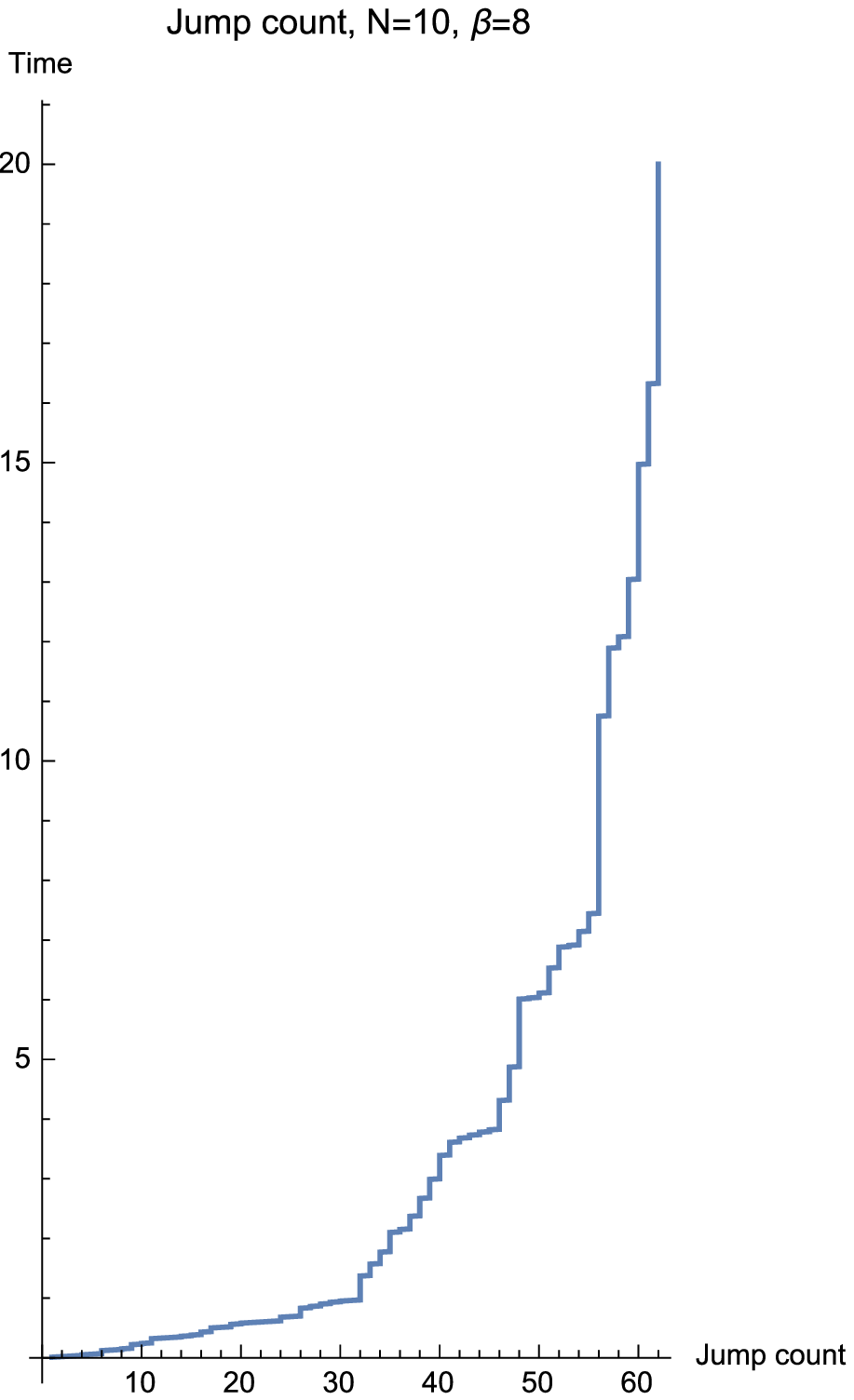}
\end{array}
\]
\caption{\label{fig:sampleDunklA}Sample of the Dunkl process of type $A_{N-1}$ and its jump count for $N=10$, $\beta=8$, and a centered, equally spaced initial configuration with unit distance between nearest neighbors. The horizontal lines represent jumps.}
\end{figure}

We plot a sample of the type-$A_{N-1}$ Dunkl process in Figure~\ref{fig:sampleDunklA}. The left plot shows the path that each particle takes. As the particles diffuse and repel each other due to the first term in brackets on the rhs of \eqref{eq:DunklATPD}, they exchange positions randomly. These exchanges are represented by horizontal lines in the plot, and it is apparent that the probability of an exchange grows the closer two particles are to each other. Then, it becomes clear that this type of Dunkl process is composed of two parts: the diffusing path itself, and the particle order. Since the exchange interactions generate all possible permutations of $N$ objects, it follows that the discontinuous part is a stochastic process that takes values in the symmetric group $S_N$.

The first study of the jumps in Dunkl processes was carried out by Gallardo and Yor in \cite{GallardoYor06A}, where they succeeded in deriving a skew-product decomposition of the Dunkl process of rank one in terms of its jump part and its continuous part. They also succeeded in expressing the multidimensional case in terms of two martingale parts, one continuous and the other completely discontinuous, in \cite{GallardoYor06B}. Later, the first skew-product decomposition in the multidimensional case was achieved by Chybiryakov in \cite{Chybiryakov08}. However, there have been no studies on the Dunkl jump processes since, and very little is known about these jump processes, particularly in physical terms. 

This does not mean that Dunkl processes and Dunkl operators are unknown in physics. In fact, Dunkl operator theory has been applied to great effect in the area of integrable systems \cite{GenestVinetZhedanov14,deBieGenestVinet16}, particularly those of Calogero-Moser-Sutherland type \cite{BakerForrester, Khastgir, EtingofGinzburg}. In addition, the radial Dunkl processes of types $A_{N-1}$ and $B_N$ are equivalent to the Dyson model \cite{Dyson62} and to the Wishart-Laguerre models \cite{Bru,KonigOConnell}, which are well-known random matrix eigenvalue processes \cite{Forrester}. However, the exchange mechanism responsible for Dunkl jump processes has only rarely been studied in physics, mostly in the form of a generalized spin interaction \cite{Polychronakos92, Frahm93, HikamiWadati93}.

In the present paper, we study the dynamics of Dunkl jump processes by making use of the observation that a Dunkl process' trajectory between any two jumps is given by the stochastic differential equation (SDE) of its radial part. Then, it is clear that the jumps depend on the radial process' trajectory but not vice versa, so in order to treat the jump process in isolation we express the rate of every possible jump as an expectation taken with respect to the law of the radial process. Then, we derive the time-evolution equation for the jump process' law (namely, the master equation), and obtain our main results from it.

For the first result, we consider processes of type $A_{N-1}$ and $B_N$, with $N$ particles starting from an initial configuration $\bx_0$ in the closed Weyl chamber $C_{W_R^N}$, where \begin{align}
C_{W_A^N}&:=\{\bx\in\mbb{R}^N:x_1\leq x_2\leq\cdots\leq x_N \},\notag\\
C_{W_B^N}&:=\{\bx\in\mbb{R}^N:0\leq x_1\leq x_2\leq\cdots\leq x_N \},
\end{align} and the processes evolve for a time $t$. In such a setting, we denote the total jump rate by $\Lambda_\beta^{(R)}(t|\bx_0)$ with $R=A_{N-1}$ or $B_N$, and we find the following behavior in the \emph{bulk scaling limit} where $t=N$ and $N\to\infty$ \cite{KatoriTanemura07}. \begin{theorem}\label{th:phasetransition}
	Set $\beta>1$ and consider a sequence of initial configurations $\{\bx_0^{(N)}\in C_{W_R^N}\}_{N=2}^\infty$ such that $\|\bx_{0}^{(N)}\|\leq K$ with $K>0$ fixed. In the bulk scaling limit, the total jump rate per particle for Dunkl jump processes of type $R=A_{N-1}$ and $B_N$ with equal multiplicities $\beta/2$ for all roots is given by
	\begin{equation}
	\lim_{N\to\infty}\frac{1}{N}\Lambda_\beta^{(R)}(N|\bx_0^{(N)})=\frac{\beta}{c_R(\beta-1)},
	\end{equation}
	with $c_{A_{N-1}}=8$ and $c_{B_N}=4$, respectively. Therefore, we have a phase transition at $\beta_c=1$ with critical exponent equal to one.
\end{theorem}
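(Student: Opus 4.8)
The plan is to start from the master equation for the Dunkl jump process, which (by the paper's construction) expresses the total jump rate $\Lambda_\beta^{(R)}(t|\bx_0)$ as an expectation, taken over the law of the radial Dunkl process started at $\bx_0$, of a sum over positive roots $\alpha\in R_+$ of terms of the form $\frac{\beta}{2}\langle\alpha,\alpha\rangle / \langle\alpha,\bX_t\rangle^2$ (with the appropriate normalization for short/long roots in the $B_N$ case). So the first step is to write
\begin{equation}
\Lambda_\beta^{(R)}(t|\bx_0)=\frac{\beta}{2}\sum_{\alpha\in R_+}\langle\alpha,\alpha\rangle\,\mbb{E}_{\bx_0}\!\left[\frac{1}{\langle\alpha,\bX_t\rangle^2}\right],
\end{equation}
and then to observe that for $R=A_{N-1}$ the positive roots are $e_i-e_j$ for $i<j$, so $\langle\alpha,\bX_t\rangle=X_t^{(j)}-X_t^{(i)}$ is a difference of two coordinates of the radial process (the Dyson model with parameter $\beta$), while for $R=B_N$ one additionally has roots $e_i$ and $e_i\pm e_j$, giving rise to Wishart/Laguerre-type coordinates. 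Thus everything reduces to controlling second inverse moments of gaps in the Dyson/Laguerre $\beta$-ensemble dynamics at time $t=N$.

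The second step is the key analytic input: in the bulk scaling limit $t=N\to\infty$, the radial process relaxes to its stationary distribution, which is exactly the $\beta$-Hermite (resp. $\beta$-Laguerre) ensemble at the appropriate scale, and the boundedness assumption $\|\bx_0^{(N)}\|\le K$ guarantees that the initial condition is negligible on the macroscopic scale so that the fixed-time law at $t=N$ is well-approximated by the equilibrium measure (semicircle, resp. Marchenko–Pastur). I would therefore replace $\mbb{E}_{\bx_0}[\langle\alpha,\bX_N\rangle^{-2}]$ by the corresponding equilibrium average and estimate $\sum_{i<j}\mbb{E}[(X^{(j)}-X^{(i)})^{-2}]$. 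The sum is dominated by nearby pairs $j=i+k$ with small $k$; the local gap $X^{(i+k)}-X^{(i)}$ around a bulk point with density $\rho$ behaves like $k/(N\rho)$ up to fluctuations, and the crucial point is that $\mbb{E}[(X^{(i+k)}-X^{(i)})^{-2}]$ is finite precisely when $\beta>1$ (the level-repulsion exponent of the $\beta$-ensemble makes the gap density vanish like $s^{\beta}$ at $s=0$, so $s^{-2}$ is integrable iff $\beta>1$), and behaves like $\text{const}\cdot(N\rho)^2/k^2\cdot\frac{\beta}{\beta-1}$ for the $k=1$ term with the $\frac{\beta}{\beta-1}$ factor coming from the normalization of the $s^{\beta}e^{-\cdots}$-type nearest-neighbor spacing. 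Summing $\sum_k k^{-2}=\pi^2/6$-type series and then integrating $\int\rho(x)^2\,\ud x$ against the semicircle (resp. Marchenko–Pastur) law, together with the overall $\frac{\beta}{2}\langle\alpha,\alpha\rangle$ prefactor and the $1/N$ normalization, should produce the stated constant $\frac{\beta}{c_R(\beta-1)}$ with $c_A=8$, $c_B=4$; the factor-of-two difference between the two cases reflects the doubled root length (or equivalently the reflection at the origin) in type $B_N$.

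The third step is to make the equilibrium approximation rigorous and uniform: one must show that the error from (a) replacing the time-$N$ law by equilibrium and (b) truncating the sum over pairs to $k\le k_0$ both vanish after dividing by $N$, uniformly over the bounded family $\{\bx_0^{(N)}\}$. For (a) I would use the known exponential (or at least polynomial) relaxation of the radial Dunkl semigroup toward its stationary law — this is essentially the $L^2$ spectral-gap or intertwining property of Dunkl processes — combined with a uniform-integrability bound on $\langle\alpha,\bX_N\rangle^{-2}$ that again needs $\beta>1$; this is where the restriction $\beta>1$ in the hypothesis is genuinely used. For (b) I would use rigidity/local-law estimates for $\beta$-ensembles (eigenvalue locations concentrate near their classical positions) to control the tail $\sum_{k>k_0}\mbb{E}[(\cdots)^{-2}]\lesssim N\sum_{k>k_0}k^{-2}$, which is $o(N)$ as $k_0\to\infty$.

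\medskip

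The main obstacle I anticipate is step three's uniform control of the inverse second moment of the smallest gaps: one needs a quantitative lower bound on nearest-neighbor spacings in the finite-$N$ $\beta$-ensemble that is strong enough to make $\mbb{E}[(X^{(i+1)}-X^{(i)})^{-2}]$ not only finite but asymptotically equal to the predicted $\frac{\beta}{\beta-1}\cdot(\text{local density})^{-2}$-type expression, and to do so at the non-stationary time $t=N$. A clean route is to bypass the general $\beta$-ensemble machinery and instead work directly with the radial SDE: apply Itô's formula to $\langle\alpha,\bX_t\rangle^{-2}$ (or to $\log\langle\alpha,\bX_t\rangle$ and bootstrap), take expectations, and derive a closed or hierarchical ODE for $m_\alpha(t):=\mbb{E}[\langle\alpha,\bX_t\rangle^{-2}]$; the drift of the radial SDE contains precisely the $\frac{\beta}{2}\sum_j\langle\alpha,\alpha\rangle/\langle\alpha,\bX_t\rangle^{2}$ terms, so the stationary balance gives the $\beta/(\beta-1)$ factor almost for free, and the $t=N$ limit then follows from relaxation of this finite hierarchy. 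I would present the argument this way if the algebra of the hierarchy closes nicely; otherwise I would fall back on the $\beta$-ensemble equilibrium computation sketched above.
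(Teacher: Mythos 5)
Your proposal heads into heavy $\beta$-ensemble local-statistics territory (equilibration to the semicircle law, rigidity, gap universality) for what is in fact an exact two-step computation, and along the way it asserts an intermediate asymptotic that is inconsistent with the final answer. The paper's proof uses only two facts. First, the transition density satisfies the exact scaling \eqref{eq:TPDScaling}, so $\Lambda_\beta(t|\bx_0)=\frac{1}{t}\Lambda_\beta(1|\bx_0/\sqrt{t})$ identically; with $\|\bx_0^{(N)}\|\le K$ this gives $\Lambda_\beta(N|\bx_0^{(N)})=\frac{1}{N}\Lambda_\beta(1|\mb{0})+O(|R_+|/N^2)$ with no appeal to relaxation or spectral gaps (the unconfined radial process has no stationary law to relax to; what you call ``equilibrium at time $N$'' is exactly this self-similarity). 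Second, Lemma~\ref{lm:totaljumprate}: because $\sum_{\balpha\in R_+}\|\balpha\|^2(\balpha\cdot\bx)^{-2}w_\beta(\bx)=\frac{1}{\beta(\beta-1)}\sum_i\partial^2_{x_i} w_\beta(\bx)$ when all multiplicities equal $\beta/2$, the Gaussian integral defining $\Lambda_\beta(1|\mb{0})$ is evaluated exactly by two integrations by parts and a scaling of the normalization constant, yielding $\frac{\beta|R_+|}{4(\beta-1)}$; the theorem is then pure root counting, $|A_{N-1,+}|=N(N-1)/2$ and $|B_{N,+}|=N^2$. The point you miss is that the \emph{sum over roots} of the singular integrands is $W$-invariant and equals a Laplacian of the weight, even though each individual term is intractable --- this is precisely the ``closing of the hierarchy'' you hope for in your fallback paragraph, but it closes only for the total rate, not for the individual $m_\alpha(t)$.

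Concretely, the gaps in your route are: (i) you never compute the constant; you only assert that the semicircle/Marchenko--Pastur integrals ``should produce'' $c_R$; (ii) your claimed form $\mbb{E}[(X^{(i+k)}-X^{(i)})^{-2}]\sim\mathrm{const}\cdot(N\rho)^2k^{-2}\cdot\frac{\beta}{\beta-1}$ cannot hold for $k\ge 2$, since the $\frac{\beta}{\beta-1}$ enhancement is a nearest-neighbor level-repulsion effect (integrability of $s^{\beta-2}$ at $s=0$), whereas the exact answer $\frac{\beta}{\beta-1}\cdot\frac{|R_+|}{4}=\frac{|R_+|}{4}+\frac{|R_+|}{4(\beta-1)}$ shows that the singular and regular contributions must recombine into a single global factor --- your local bookkeeping would have to reproduce both pieces exactly, which you do not attempt; and (iii) the uniform control of inverse second moments of minimal gaps at the non-stationary time $t=N$ for all $\beta>1$, which you correctly flag as the main obstacle, is never supplied. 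The exact scaling property and the Laplacian identity for $w_\beta$ make all of this machinery unnecessary.
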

This phase transition is closely related to whether particles in the underlying radial processes collide. It is well-known \cite{Demni08} that whenever all of the multiplicities are larger than or equal to 1/2, the first collision time tends to infinity almost surely; here, all multiplicities are equal to $\beta/2$, meaning that there are no collisions for $\beta\geq1$. This indicates that the phase transition lies between a non-colliding (ordered) phase and a colliding (disordered) phase.

The second result involves the main dynamical properties of the jumps, as derived from the master equation. In the example depicted in Figure~\ref{fig:sampleDunklA}, we can observe that every time a jump occurs only two particles exchange their positions, so there exist $N(N-1)/2$ different kinds of jumps, and every kind of jump has its own rate. In general we denote the jump rates by $\lambda_{\beta}(t,\balpha|\bx_{0})$, where $\balpha$ is a root in $R$, and at the same time, $\balpha$ indicates the type of jump. In the case $A_{N-1}$, $\balpha$ can be simply replaced by two indices $i,j\in 1,\ldots,N$ referring to the particles exchanging positions. We also denote the sum of all possible jumps by $\Lambda_\beta(t|\bx_0)$. Then, the jump process takes values in the reflection (Weyl) group $W$ generated by reflection operators $\sigma_{\balpha}$ along $\balpha\in R$. We give the master equation in the following statement.
\begin{lemma}\label{lm:masterequation}
	Denote by $P^{\mcl{J}}_{\beta}(t,\tau|\bx_0)$ the probability of finding the jump process associated to a Dunkl process started from $\bx_0$ at $\tau\in W$ after a time $t$. Then, for $\beta>1$, $P^{\mcl{J}}_{\beta}(t,\tau|\bx_0)$ obeys the master equation
	\begin{equation}\label{eq:master}
	\frac{\partial}{\partial t}P^{\mcl{J}}_{\beta}(t,\tau|\bx_0)=\sum_{\balpha\in R_+}\lambda_\beta(t,\balpha|\bx_0)P^{\mcl{J}}_{\beta}(t,\tau\sigma_{\balpha}|\bx_0)-\Lambda_\beta(t|\bx_0)P^{\mcl{J}}_{\beta}(t,\tau|\bx_0).
	\end{equation}
\end{lemma}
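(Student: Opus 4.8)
The plan is to derive \eqref{eq:master} from the skew-product structure of the Dunkl process, treating the jump process $\mathcal{J}$ as the projection of the Dunkl process $Y$ (started at $\bx_0$) onto the Weyl group $W$, with $\{\mathcal{J}(t)=\tau\}$ identified with $\{\tau\,Y(t)\in\overline{C_{W_R^N}}\}$. The structural input I would invoke first, which is essentially the content of the decompositions of Gallardo--Yor and Chybiryakov, is that between two consecutive jumps $Y$ follows the SDE of its radial part $X$, while a jump of type $\balpha\in R_+$, which sends $Y\mapsto\sigma_\balpha Y$ and hence $\mathcal{J}\mapsto\mathcal{J}\sigma_\balpha$, is produced by the difference part of the Dunkl generator at the instantaneous rate $\rho_\balpha(Y(t))=\tfrac{k_\balpha|\balpha|^2}{2\langle\balpha,Y(t)\rangle^2}$ --- the generalization of the last term in \eqref{eq:DunklATPD} --- which, read along a radial trajectory, is a function of the radial configuration only. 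Hence, conditionally on the radial path $\{X(s)\}_{0\le s\le t}$, the process $\mathcal{J}$ is a time-inhomogeneous pure-jump Markov chain on $W$, and I would write its Kolmogorov forward equation, $\partial_t q(t,\tau)=\sum_{\balpha\in R_+}\rho_\balpha(X(t))\bigl(q(t,\tau\sigma_\balpha)-q(t,\tau)\bigr)$ with $q(t,\tau):=\mathbb{P}(\mathcal{J}(t)=\tau\mid\{X(s)\}_{s\le t})$, which already has the gain--loss shape of \eqref{eq:master}.

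Taking the expectation of this identity over the law of the radial process started at $\bx_0$ then yields \eqref{eq:master}, with $\lambda_\beta(t,\balpha|\bx_0)=\mathbb{E}_{\bx_0}[\rho_\balpha(X(t))]$ --- the expectation of the instantaneous rate with respect to the radial law, which is exactly the jump rate introduced before the statement --- and $\Lambda_\beta(t|\bx_0)=\sum_{\balpha\in R_+}\lambda_\beta(t,\balpha|\bx_0)$. The same equation can be reached without conditioning, by a direct computation on $P^{\mathcal{J}}_\beta(t,\tau|\bx_0)=\int_{\tau^{-1}\overline{C_{W_R^N}}}p(t,\by|\bx_0)\intd\by$, with $p$ the Dunkl transition density: differentiate in $t$, substitute the Dunkl heat equation for $p$, and split the Dunkl Laplacian into its diffusion part and its difference part. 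The diffusion part is in divergence form (modulo the Dunkl weight), so its integral over a single chamber is a flux through the walls, which vanishes because $p$ decays to $0$ on the reflection hyperplanes; in the difference part the substitution $\by\mapsto\sigma_\balpha\by$, using the $\sigma_\balpha$-invariance of $\langle\balpha,\by\rangle^2$ and of the weight together with the identity $\sigma_\balpha\,\tau^{-1}\overline{C_{W_R^N}}=(\tau\sigma_\balpha)^{-1}\overline{C_{W_R^N}}$, reassembles exactly the right-hand side of \eqref{eq:master}.

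The hypothesis $\beta>1$ is what makes the rates $\lambda_\beta(t,\balpha|\bx_0)$ finite, hence \eqref{eq:master} meaningful term by term: $\mathbb{E}_{\bx_0}[\langle\balpha,X(t)\rangle^{-2}]$ is finite precisely when all the multiplicities exceed $1/2$ --- the same threshold as the no-collision property recalled after Theorem~\ref{th:phasetransition} --- because the Lebesgue density of the radial process vanishes like $(\mathrm{dist})^{\beta}$ at each wall, and $(\mathrm{dist})^{\beta-2}$ is integrable there if and only if $\beta>1$. I expect the bulk of the technical work to be precisely this integrability estimate, together with the almost sure finiteness of the number of jumps on $[0,t]$ (so that the conditional chain and its forward equation are legitimate); the one genuinely subtle point in the averaging step is to justify that the expectation over the radial path commutes with the instantaneous rate --- that the rate factors out of the expectation --- and this is exactly where the conditional independence supplied by the skew-product decomposition is needed.
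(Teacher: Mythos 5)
Your route is genuinely different from the paper's, and it has a real gap at the step you yourself flag as ``the one genuinely subtle point.'' After conditioning on the radial path and writing the conditional forward equation for $q(t,\tau)=\mbb{P}(\mcl{J}(t)=\tau\mid\{X(s)\}_{s\le t})$, you need
$\mbb{E}\bigl[\rho_{\balpha}(X(t))\,q(t,\nu)\bigr]=\mbb{E}\bigl[\rho_{\balpha}(X(t))\bigr]\,\mbb{E}\bigl[q(t,\nu)\bigr]$
to recover \eqref{eq:master}, and you attribute this to ``conditional independence supplied by the skew-product decomposition.'' But the skew-product decomposition makes the jumps a functional of the radial path; it creates dependence between $\rho_{\balpha}(X(t))$ and $\{\mcl{J}(t)=\nu\}$ rather than removing it. Indeed, by the tower property $\mbb{E}[\rho_{\balpha}(X(t))q(t,\nu)]=\mbb{E}[\rho_{\balpha}(X(t))\mathbf{1}_{\{\mcl{J}(t)=\nu\}}]$, and the joint law of $(X(t),\mcl{J}(t))$ does not factor for $\bx_0\neq\mb{0}$: writing $\by=\tau\bx$ in \eqref{eq:TPD}, the chamber label and the radial coordinate are coupled through $E_\beta(\bx_0/\sqrt{t},\tau\bx/\sqrt{t})$, which is not a product $f(\tau)g(\bx)$. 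The same obstruction reappears, undisguised, in your second route: integrating the difference part of \eqref{eq:ForwardFPE} over a single chamber produces $\int_{\tau C_W}\rho_{\balpha}(\by)p(t,\by|\bx_0)\intd^N\by$, which is $\mbb{E}[\rho_{\balpha}\mathbf{1}_{\{\mcl{J}(t)=\tau\}}]$ and not $\lambda_\beta(t,\balpha|\bx_0)P^{\mcl{J}}_\beta(t,\tau|\bx_0)$; what your computation honestly yields is \eqref{eq:master} with $\lambda_\beta$ replaced by the conditional rates $\mbb{E}[\rho_{\balpha}(X(t))\mid\mcl{J}(t)=\nu]$.

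The paper sidesteps this entirely: it takes the jump process to be, by construction, the inhomogeneous Poisson random walk on $W$ driven by the \emph{pre-averaged, deterministic} rate functions $\lambda_\beta(s,\cdot|\bx_0)$, writes $P^{\mcl{J}}_\beta(t,\tau|\bx_0)$ explicitly as $\mnm{e}^{-\int_0^t\Lambda_\beta}$ times a sum of iterated integrals of products of these rates, and differentiates; no factorization of an expectation is ever required because the rates are never random. The only non-routine step in that calculation is the group-theoretic relabeling $\sigma_{\balpha}\tau=\tau\sigma_{\tau^{-1}\balpha}$ combined with the $W$-invariance of $R$ and of $k$, which converts the left-multiplication jumps (rate depending on both $\tau$ and the radial position) into right-multiplication jumps with $\tau$-independent rates --- precisely the bookkeeping you elide when you assert that a reflection $\sigma_{\balpha}$ sends $\mcl{J}\mapsto\mcl{J}\sigma_{\balpha}$ with a rate depending on the radial configuration only. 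To repair your argument you would either have to prove the factorization (which, as above, fails for the exact chamber-label process when $\bx_0\neq\mb{0}$) or adopt the paper's Poissonized construction as the definition of the jump process, at which point your conditional-chain machinery is no longer needed.
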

From results in \cite{AndrausMiyashita15}, we deduce that the asymptotic behavior of all jump rates is given by
\begin{equation}\label{eq:jumprateslarget}
\lambda_{\beta}(t,\balpha|\bx_{0})=\frac{1}{t}\lambda_{\beta}(1,\balpha|\mb{0})+O(\|\bx_0\|^2t^{-2}),
\end{equation}
where $\bx=\mb{0}$ means that $x_i=0\ {}^\forall i=1,\ldots N$. After inserting this relation in the master equation, we immediately obtain the following.
\begin{theorem}\label{th:relaxation}
	For $\bx_0\neq\mb{0}$ and $\beta >1$, relaxation asymptotics in Dunkl jump processes follow a power law with an exponent less than or equal to unity in absolute value.
\end{theorem}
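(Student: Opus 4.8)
The plan is to read the master equation \eqref{eq:master} as a finite linear system of ODEs for the vector $\mb{P}(t)$ of probabilities $P^{\mcl{J}}_{\beta}(t,\tau|\bx_0)$, $\tau\in W$, namely $\dot{\mb{P}}(t)=A(t)\mb{P}(t)$, where $A(t)$ is the generator assembled from the rates $\lambda_\beta(t,\balpha|\bx_0)$ and the total rate $\Lambda_\beta(t|\bx_0)$, so that both its rows and its columns sum to zero. Substituting \eqref{eq:jumprateslarget} (and its sum over $R_+$ for the total rate) splits this as $A(t)=t^{-1}A_0+t^{-2}E(t)$, with $A_0$ the generator built from the \emph{frozen} rates $\lambda_\beta(1,\balpha|\mb{0})$ and $\|E(t)\|=O(\|\bx_0\|^2)$ uniformly in $t$. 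The decisive structural point is that the rates decay like $t^{-1}$ rather than staying bounded below, so the natural relaxation clock is $u=\ln t$: writing $\widetilde{\mb{P}}(u):=\mb{P}(e^u)$ turns the system into
\begin{equation*}
\frac{\ud}{\ud u}\widetilde{\mb{P}}(u)=\bigl(A_0+e^{-u}\widetilde{E}(u)\bigr)\widetilde{\mb{P}}(u),
\end{equation*}
an autonomous system perturbed by an exponentially small, hence integrable, term. Exponential relaxation on the $u$-scale is exactly power-law relaxation on the $t$-scale, which is the non-trivial power law in the statement.

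Next I would analyse $A_0$. Since the reflections $\sigma_\balpha$, $\balpha\in R_+$, generate $W$, and the frozen rates $\lambda_\beta(1,\balpha|\mb{0})$ are strictly positive — they are the collision-type expectations of the radial process, which for $\beta>1$ never collides and has finite inverse-spacing moments, the very fact already presupposed by \eqref{eq:jumprateslarget} — the matrix $A_0$ is the generator of an irreducible continuous-time random walk on the finite group $W$. Hence $0$ is a simple eigenvalue, the uniform distribution $\mb{P}_\infty$ is the unique probability vector in its kernel, and all other eigenvalues have strictly negative real part. A Levinson-type asymptotic-integration argument for the perturbed system then gives $\widetilde{\mb{P}}(u)\to\mb{P}_\infty$ and $\|\widetilde{\mb{P}}(u)-\mb{P}_\infty\|\asymp e^{-\theta u}$, up to a possible polynomial-in-$u$ prefactor, so that $\|\mb{P}(t)-\mb{P}_\infty\|\asymp t^{-\theta}$ for a relaxation exponent $\theta>0$. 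The hypothesis $\bx_0\neq\mb{0}$ enters here: when $\bx_0=\mb{0}$ the rates are exactly proportional to $t^{-1}$ and the jump chain has already mixed completely by any positive time, so relaxation is vacuous; for $\bx_0\neq\mb{0}$ the $\|\bx_0\|^2$-term is genuinely present and it is what fixes the exponent.

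The step I expect to be the main obstacle is pinning down $\theta$ sharply and, in particular, establishing $|\theta|\le 1$. This needs quantitative input beyond the qualitative ``autonomous plus integrable perturbation'' picture: one has to feed the explicit form of the frozen rates and of the $t^{-2}$ correction (via \eqref{eq:jumprateslarget} and the identities of \cite{AndrausMiyashita15} behind it) into the variation-of-parameters / spectral estimate, and verify that it is this slowly decaying $t^{-2}$ correction — rather than the intrinsic spectral gap of $A_0$ alone, which need not be below $1$ — that controls the tail of $\|\mb{P}(t)-\mb{P}_\infty\|$ and caps the relaxation exponent at one. Checking irreducibility and strict positivity of the frozen rates from their representation as expectations over the radial law is comparatively routine, and the borderline resonant or non-semisimple cases (for instance an eigenvalue of $A_0$ with real part exactly $-1$) contribute only logarithmic-in-$t$ factors that do not affect the stated bound.
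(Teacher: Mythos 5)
Your decomposition $A(t)=t^{-1}A_0+t^{-2}E(t)$ is exactly the paper's: it isolates the frozen rates $\lambda_\beta(1,\balpha|\mb{0})$ into a matrix (the paper's $M_{\tau\nu}$), solves the resulting $t^{-1}$-system exactly as a power law $(t/t_0)^{r_i}$ in the eigenbasis, and treats the $\bx_0$-dependent remainder as a perturbation; your logarithmic time change $u=\ln t$ is just a repackaging of that explicit integration. Two remarks. First, the step you defer as the ``main obstacle'' --- showing that the $t^{-2}$ correction caps the exponent at one --- is in fact the easiest part and needs no input beyond the decomposition you already have: by Duhamel, the perturbation contributes $\int_{t_0}^t (t/s)^{r}\,O(s^{-2})\intd s$ on each eigenspace, which is $O(t^{r})$ when $r>-1$ and $O(t^{-1})$ when $r<-1$, so the observed exponent is $\max(r_1,-1)$ and hence at most one in absolute value; the paper does precisely this by bounding $P^{\mcl{J}}_{\beta}-P^{\mcl{L}}_{\beta}=O(t^{-1})$, and no explicit form of the frozen rates is required. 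Second, your Perron--Frobenius/Levinson framing is more general than necessary: since $\sigma_{\balpha}^2=\mnm{id}$, the matrix $M_{\tau\nu}$ is symmetric (and negative semidefinite, as checked via the Dirichlet-form computation \eqref{eq:negsemidef}), so its eigenvalues are real, it is diagonalizable, its kernel is spanned by the uniform vector, and the resonant or Jordan-block caveats about polynomial-in-$u$ prefactors never arise. With those two points filled in, your argument coincides with the paper's proof.
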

The meaning of this statement in the case $A_{N-1}$ is that the equilibrium law for the jump process gives all elements of $S_N$ equal probability, and that for large times $t$ the probability $P^{\mcl{J}}_{\beta}(t,\tau|\bx_0)$ is given by a $\beta$-dependent relaxation exponent $r(\beta)<0$ as
$$
P^{\mcl{J}}_{\beta}(t,\tau|\bx_0)=\frac{1}{N!}+C(\tau)t^{r(\beta)}+o(t^{r(\beta)})
$$
for $C(\tau)$ a static function of $\tau$. The $\beta$-dependence of $r(\beta)$ is a more delicate matter because of \eqref{eq:jumprateslarget}, as the two terms on the rhs are representative of the two factors at play in the relaxation process. When $\bx_0=\mb{0}$ only the first term survives, and the dynamical part of $P^{\mcl{J}}_{\beta}(t,\tau|\mb{0})$ is given by the solution to a simple matrix eigenvalue problem. In this case the relaxation exponent is the eigenvalue closest to zero, which we denote $r_1(\beta)<0$. If, on the other hand, $\bx_0\neq\mb{0}$, then the second term in \eqref{eq:jumprateslarget} becomes important, as it automatically 
adds a perturbation of order $t^{-1}$ to the solution for $\bx_0=\mb{0}$. Then, if $r_1(\beta)<-1$, the term of order $t^{-1}$ is dominant: this is the basis for Theorem~\ref{th:relaxation}.

Then, it becomes important to find the relaxation exponent for the case $\bx_0=\mb{0}$ as a function of $\beta$ to clarify whether the initial configuration $\bx_0$ dominates the relaxation dynamics. This is the motivation for our final result.
\begin{theorem}\label{th:perturbation}
	For every $M>0$, there exists $C(M)>1$ such that, for $\beta>C(M)$ and equal multiplicities $\beta/2$ for all roots, the dynamical exponents $\{r_i(\beta)\}$ (with $r_{i+1}(\beta)<r_i(\beta)$ for every $i>0$) of a Dunkl jump process with $\bx_0=\mb{0}$ obey the first-order asymptotics
	\begin{equation}
	r_i(\beta)=-\Big(|r_i^{(0)}|+\frac{|r_i^{(1)}|}{\beta}\Big)+O(\beta^{-2})+O(\mnm{e}^{-M}),
	\end{equation}
	with $r_i^{(0)}$, $r_i^{(1)}$ constant with respect to $\beta$.
\end{theorem}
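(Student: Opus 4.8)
The plan is to obtain Theorem~\ref{th:perturbation} by perturbation theory for the eigenvalue problem attached to the $\bx_0=\mb{0}$ dynamics, treating $1/\beta$ as the small parameter. First I would set up the relevant matrix: by the discussion following Theorem~\ref{th:relaxation}, when $\bx_0=\mb{0}$ the dynamical part of $P^{\mcl{J}}_{\beta}(t,\tau|\mb{0})$ solves a matrix eigenvalue problem obtained from inserting $\lambda_\beta(t,\balpha|\mb{0})=t^{-1}\lambda_\beta(1,\balpha|\mb{0})$ into the master equation \eqref{eq:master} and looking for self-similar solutions $P^{\mcl{J}}_\beta(t,\tau|\mb{0})=\frac{1}{|W|}+\sum_i c_i(\tau)t^{r_i(\beta)}$. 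This turns \eqref{eq:master} into an eigenvalue equation $\mcl{L}_\beta v = r\, v$ on functions over $W$, where $\mcl{L}_\beta$ is the (real, symmetric after the appropriate equilibrium-measure conjugation) generator whose off-diagonal entries are the rescaled jump rates $\lambda_\beta(1,\balpha|\mb{0})$ and whose diagonal is $-\Lambda_\beta(1|\mb{0})$. The key structural input I would establish is the $\beta$-asymptotics of these rescaled rates: using \eqref{eq:jumprateslarget} together with the explicit rank-one formulas of Gallardo--Yor/Chybiryakov (or the radial-process expectations used to prove Theorem~\ref{th:phasetransition}), I expect $\lambda_\beta(1,\balpha|\mb{0}) = \beta\,a_\balpha + b_\balpha + O(\beta^{-1})$ for constants $a_\balpha,b_\balpha$, hence $\mcl{L}_\beta = \beta\,\mcl{L}^{(0)} + \mcl{L}^{(1)} + O(\beta^{-1})$.

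Second, I would analyze the leading operator $\mcl{L}^{(0)}$. Since the jump rates all scale with the same leading power of $\beta$ in the bulk regime considered, $\beta\mcl{L}^{(0)}$ dominates, and after dividing through by $\beta$ the exponents are, to leading order, $\beta$ times the eigenvalues of $\mcl{L}^{(0)}$; but the theorem asserts $r_i(\beta)\to -|r_i^{(0)}|$ finite, so in fact the correct normalization must make $\mcl{L}^{(0)}$ itself the operator whose spectrum gives the $r_i^{(0)}$, with the $\beta$-dependence entering only through lower-order corrections. Concretely I would argue that $\Lambda_\beta(1|\mb{0})$ and the individual rates share a common leading factor that cancels in the time-rescaling normalizing the relaxation, so that $\mcl{L}_\beta$, once written in the natural units, reads $\mcl{L}^{(0)} + \beta^{-1}\mcl{L}^{(1)} + O(\beta^{-2})$ on a bounded domain of $\beta$, up to an error of size $O(\mnm{e}^{-M})$ coming from truncating whatever series/integral representation of the rates is used (this is the source of the $O(\mnm{e}^{-M})$ term and of the constant $C(M)$: one needs $\beta$ large enough that the tail past the $M$-th term is $O(\mnm{e}^{-M})$). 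Then I would verify that $\mcl{L}^{(0)}$ is self-adjoint with respect to the uniform measure on $W$ (the $\beta\to\infty$ equilibrium), has $0$ as a simple eigenvalue with the constant eigenvector, and has a spectral gap; label its remaining eigenvalues $r_i^{(0)}<0$.

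Third, with a genuine spectral gap for $\mcl{L}^{(0)}$ and the norm-resolvent convergence $\mcl{L}_\beta\to\mcl{L}^{(0)}$ on the finite-dimensional space $\ell^2(W)$, standard analytic (Rellich--Kato) perturbation theory applies directly: each eigenvalue $r_i^{(0)}$ deforms analytically in $\beta^{-1}$, giving $r_i(\beta) = r_i^{(0)} + \beta^{-1} r_i^{(1)} + O(\beta^{-2})$ with $r_i^{(1)} = \langle \phi_i, \mcl{L}^{(1)}\phi_i\rangle$ for $\phi_i$ the normalized eigenvector of $\mcl{L}^{(0)}$, plus the $O(\mnm{e}^{-M})$ truncation error inherited from Step~2. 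Taking absolute values and using $r_i^{(0)},r_i^{(1)}<0$ (negativity of $r_i^{(0)}$ because these are relaxation rates; I would check the sign of $r_i^{(1)}$ from the explicit form of $\mcl{L}^{(1)}$, and if it is not automatically negative, restate the theorem's $|r_i^{(1)}|$ as $r_i^{(1)}$ — but I expect the correction to also be relaxing) yields the claimed form. The ordering $r_{i+1}(\beta)<r_i(\beta)$ for $\beta>C(M)$ follows because the unperturbed spectrum is strictly ordered and the $O(\beta^{-2})+O(\mnm{e}^{-M})$ corrections are smaller than the gaps once $C(M)$ is chosen large enough.

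The main obstacle I anticipate is Step~2: pinning down the precise $\beta$-expansion of the rescaled jump rates $\lambda_\beta(1,\balpha|\mb{0})$ and, in particular, identifying the correct common normalization so that the leading operator is $\mcl{L}^{(0)}$ rather than $\beta\mcl{L}^{(0)}$, and controlling the tail of the representation of these rates uniformly enough to produce the honest $O(\mnm{e}^{-M})$ error with an explicit threshold $C(M)$. Once that expansion is in hand, Steps~1 and~3 are essentially bookkeeping plus an invocation of finite-dimensional analytic perturbation theory.
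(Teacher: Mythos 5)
Your high-level strategy --- first-order perturbation theory in $1/\beta$ on the finite-dimensional generator of the master equation --- is exactly the paper's, but the proposal has two genuine gaps, and the first is the entire technical content of the result. Your guessed expansion $\lambda_\beta(1,\balpha|\mb{0})=\beta\,a_{\balpha}+b_{\balpha}+O(\beta^{-1})$ is wrong: the rescaled rates are $O(1)$ in $\beta$ and converge to the frozen values $\|\balpha\|^2k(\balpha)/[4(\balpha\cdot\bz^{(R)})^2]$ determined by the peak set of the root system (cf.\ \eqref{eq:frozenrates}), consistent with $\Lambda_\beta(1|\mb{0})=\beta|R_+|/[4(\beta-1)]\to|R_+|/4$. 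No ``common normalization that cancels'' is needed or available; what is needed is Lemma~\ref{lm:coefficientasymptotics}, a Laplace-type expansion of the integral \eqref{eq:jumprates} around the shifted peak vector $\sqrt{\beta}\bz^{(R)}$, which produces the expansion $\lambda_\beta(1,\balpha|\mb{0})=\frac{\|\balpha\|^2k(\balpha)}{4(\balpha\cdot\bz)^2}[1+\tilde{C}(\balpha)/\beta+O(\beta^{-2})+O(\mnm{e}^{-M})]$. The $O(\mnm{e}^{-M})$ term and the threshold $C(M)$ come from the Gaussian tail of the region away from the peak (one needs $\beta>2M/\gamma r^\star$), not from truncating a series representation of the rates at the $M$-th term. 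You correctly flag this step as the main obstacle, but you neither identify its correct form nor supply the argument, so the proposal is missing the central lemma.

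The second gap is the sign of $r_i^{(1)}$. The theorem asserts the form $-(|r_i^{(0)}|+|r_i^{(1)}|/\beta)$, i.e.\ that the first-order correction makes every exponent strictly more negative; you propose to ``check the sign and restate the theorem if necessary,'' which leaves the statement unproved. The paper derives $r_i^{(1)}=\sum_{\balpha}\frac{\|\balpha\|^2\tilde{C}(\balpha)}{4(\balpha\cdot\bz)^2}\sum_{\tau}\phi_i^{(0)}(\tau)\phi_i^{(0)}(\tau\sigma_{\balpha})-\frac{|R_+|}{4}$ and then uses the sum rule $\sum_{\balpha}\|\balpha\|^2\tilde{C}(\balpha)/[4(\balpha\cdot\bz)^2]=|R_+|/4$ (obtained by matching the $\beta^{-1}$ coefficient of $\sum_{\balpha}\lambda_\beta(1,\balpha|\mb{0})=\Lambda_\beta(1|\mb{0})$, together with $\tilde{C}(\balpha)>0$) to complete the square and conclude $r_i^{(1)}=-\frac{1}{8}\sum_{\balpha}\frac{\|\balpha\|^2\tilde{C}(\balpha)}{(\balpha\cdot\bz)^2}\sum_{\tau}[\phi_i^{(0)}(\tau)-\phi_i^{(0)}(\tau\sigma_{\balpha})]^2<0$. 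This argument is not ``bookkeeping'' derivable from generic Rellich--Kato theory; it uses the specific structure of the rates. (A further caveat for both you and the paper: the limiting spectrum is degenerate --- e.g.\ $N-1$ eigenvectors share the eigenvalue $-1/2$ in type $A_{N-1}$ --- so the ``strictly ordered unperturbed spectrum'' you invoke for the ordering claim does not hold, and the first-order formula should properly be read as degenerate perturbation theory within each eigenspace.)
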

This behavior of $r_i(\beta)$ indicates that there exists a large-$\beta$ regime in which the relaxation exponent $r_1(\beta)$ decreases in magnitude with $\beta$, and the effect of $\bx_0$ at large $t$ depends ultimately on whether $|r_1^{(0)}|<1$. We finish by considering the case $A_{N-1}$, and we find that, because $|r_1^{(0)}|=1/2$, the relaxation exponent tends toward $1/2$ when $\beta\geq 1$ is large, but becomes $1$ whenever $|r_1(\beta)|>1$.

This paper is organized as follows. In Section~\ref{sec:definitions}, we give a brief overview of Dunkl processes and define the quantities involved in our results. In Section~\ref{sec:jumpprocess}, we define Dunkl jump processes, derive their master equation, and discuss several immediate properties of the latter.
In Section~\ref{sec:phasetransition}, we discuss the jump counting process, and calculate the asymptotic form of the total jump rate. From there, we prove Theorem~\ref{th:phasetransition}. 
In Section~\ref{sec:relaxation}, we consider the relaxation behavior of the jump process. The asymptotics from Section~\ref{sec:phasetransition} become instrumental in the calculations involved. We prove Theorem~\ref{th:relaxation} by calculating the order of magnitude of the difference between solutions of the master equation for $\bx_0\neq \mb{0}$ and $\bx_0=\mb{0}$. We also prove Theorem~\ref{th:perturbation} by considering the case $\bx_0=\mb{0}$ in the limit $\beta\to\infty$ and performing a first-order perturbation on its solution. 
In Section~\ref{sec:relaxationA}, we consider the case $A_{N-1}$ explicitly, and make use of its connection to Polychronakos-Frahm (PF) spin chains \cite{Polychronakos92, Frahm93} to find the relaxation exponent when $\bx_0=\mb{0}$ and $\beta\to\infty$, which is exactly one half. Then, it becomes clear that in the case $A_{N-1}$ there exists a relaxation regime that is not dominated by the initial configuration at large values of $\beta$.
Finally, we give our concluding remarks and briefly state several related open problems in Section~\ref{sec:remarks}.

\section{Setting and definitions}\label{sec:definitions}

We briefly fix the definitions necessary for our results starting from Dunkl operators \cite{Dunkl89}; we encourage the reader to consult \cite{RoslerVoit08,DunklXu} for a more detailed treatment of the topics covered here.
Consider the  $N$-dimensional Euclidean space $\mbb{R}^N$ and the reflection operator $\sigma_{\balpha}\in O(N)$ defined by
\begin{equation}
\sigma_{\balpha}\bx:=\bx-2\frac{\balpha\cdot\bx}{\|\balpha\|^2}\balpha,
\end{equation}
with the Euclidean inner product between $\balpha$ and $\bx\in\mbb{R}^N$ denoted by $\balpha\cdot\bx$,
and $\|\balpha\|^2=\balpha\cdot\balpha$.

Let us fix a root system $R\subset\mbb{R}^N\setminus\{0\}$, namely, a finite set of vectors, or \emph{roots}, 
invariant under the action of the reflections along its own elements.
For simplicity, we assume $R$ to be reduced, that is, if $\balpha,\ \bxi\in R$ 
and there exists a constant $c$ such that $\balpha=c\bxi$, then $c=\pm1$.
In fact, if $\balpha\in R$, then $-\balpha\in R$, so we only require one half of $R$ to obtain all the reflections $R$ can generate.
Thus, one can choose an arbitrary vector, say $\mb{m}\in\mbb{R}^N$, such that $\mb{m}\cdot\balpha\neq 0$ for every root $\balpha$,
and with that, we can define the positive subsystem $R_+$ as the subset of $R$ which contains all the roots that satisfy $\mb{m}\cdot\balpha>0$.
We also define by $C_W=\{\bx\in\mbb{R}^N: \balpha\cdot\bx\geq 0, {}^\forall \balpha\in R_+\}$ the closed cone (or Weyl chamber) which contains the vector $\mb{m}$.

$R$ defines a reflection group, which we denote by $W\subset O(N)$, 
and $R$ can be partitioned by the disjoint orbits of the roots under the action of $W$.
One may then assign a parameter, called a \emph{multiplicity}, to each disjoint orbit, defining what is called a multiplicity function.
There is no restriction on these multiplicities, but for the setting we consider here we will make the following assumptions.
Define a positive real parameter $\beta$ and a multiplicity function $k(\balpha)\geq 0$, with the requirement that one of the multiplicities be equal to one. We also introduce the sum of multiplicities
\begin{equation}\label{eq:gammadefinition}
\gamma:=\sum_{\balpha\in R_+}k(\balpha)
\end{equation}
for use in later sections.

Then, the Dunkl operators $\{T_i\}_{i=1}^N$ are defined for functions $f\in \mcl{C}^1(\mbb{R}^N)$ by
\begin{equation}
T_if(\bx):=\frac{\partial}{\partial x_i}f(\bx)+\frac{\beta}{2}\sum_{\balpha\in R_+}k(\balpha)\frac{f(\bx)-f(\sigma_{\balpha}\bx)}{\balpha\cdot\bx}\alpha_i.
\end{equation}
We choose our multiplicities in this manner so that we can treat the parameter $\beta$ in the same way as in the $\beta$-ensembles from random matrix theory \cite{DumitriuEdelman02}. Then, for example, the condition that all multiplicities be equal to $\beta/2$ is realized by setting $k(\balpha)\equiv 1$.
It is known \cite{DunklXu} that Dunkl operators satisfy some of the properties of partial derivatives.
In particular, they commute with each other, so it makes sense to generalize well-known differential operators using Dunkl operators.
One such generalization is the Dunkl Laplacian $\Delta_k:=\sum_{i=1}^NT_i^2$.
It was shown in \cite{Dunkl89, DunklXu} that its explicit form for $f\in \mcl{C}^2(\mbb{R}^N)$ is
\begin{equation}\label{eq:DunklLaplacian}
\Delta_k f(\bx)=\sum_{i=1}^N\frac{\partial^2}{\partial x^2}f(\bx)+\beta\sum_{\balpha\in R_+}k(\balpha)\Big[\frac{\balpha\cdot\nabla f(\bx)}{\balpha\cdot\bx}-\frac{\|\balpha\|^2}{2}\frac{f(\bx)-f(\sigma_{\balpha}\bx)}{(\balpha\cdot\bx)^2}\Big].
\end{equation}

Dunkl processes were defined in \cite{RoslerVoit98} as the left-limited, right-continuous adapted Markov processes in the probability space $(\Omega,\mcl{F},\mbb{P})$ whose infinitesimal generator is $\Delta_k/2$. Namely, if the TPD of a process going from $\bx$ to $\by\in\mbb{R}^N$ after a time $t>0$ is denoted by $p(t,\by|\bx)$, and it satisfies the backward Fokker-Planck equation
\begin{equation}\label{eq:BackFPE}
\frac{\partial}{\partial t}p(t,\by|\bx)=\frac{1}{2}\Delta_kp(t,\by|\bx),
\end{equation}
where $\Delta_k$ acts on $\bx$, then it is a Dunkl process. The explicit form of the TPD $p(t,\by|\bx)$ is given by
\begin{equation}\label{eq:TPD}
p(t,\by|\bx)=\frac{\text{e}^{-(\|\bx\|^2+\|\by\|^2)/2t}}{c_\beta t^{N/2}}E_\beta\Big(\frac{\bx}{\sqrt{t}},\frac{\by}{\sqrt{t}}\Big)w_\beta\Big(\frac{\by}{\sqrt{t}}\Big).
\end{equation}
Here, $c_\beta$ is a normalization constant given by the Selberg integral
\begin{equation}
c_\beta=\int_{\mbb{R}^N}\text{e}^{-\|\bx\|^2/2}w_\beta(\bx)\intd^N \bx,
\end{equation}
and the \emph{Dunkl kernel} $E_\beta(\bx,\by)$ is the generalization of the exponential function defined by the relations
\begin{equation}
E_\beta(\mb{0},\by)=E_\beta(\bx,\mb{0})=1\quad \text{and}\quad T_iE_\beta(\bx,\by)=y_iE_\beta(\bx,\by)\ {}^\forall i\in\{1,\ldots,N\}.
\end{equation}
It is important to stress that while the existence of the Dunkl kernel has been proved \cite{Dunkl91},
at present there is no known general explicit form for it, except in a few particular cases.
Finally, note that $p(t,\by|\bx)$ obeys the important scaling property
\begin{equation}\label{eq:TPDScaling}
p(t,\by|\bx)\intd^N \by=p\Big(1,\frac{\by}{\sqrt{t}}\Big|\frac{\bx}{\sqrt{t}}\Big)\frac{\ud^N\by}{t^{N/2}}.
\end{equation}

It is clear that Dunkl processes are discontinuous, as evidenced by the difference term in \eqref{eq:DunklLaplacian}. 
However, one can extract the continuous part of this process by considering the sum
\begin{equation}
\hat{p}(t,\by|\bx)=\sum_{\rho\in W}{p}(t,\by|\rho\bx).
\end{equation}
Inserting $\hat{p}(t,\by|\bx)$ into \eqref{eq:BackFPE} cancels out the difference term; 
remark that $\hat{p}(t,\by|\bx)$ is normalized in the Weyl chamber $C_W$.
Then, the processes defined by $\hat{p}(t,\by|\bx)$ and \eqref{eq:BackFPE} without the difference term are continuous; they are called \emph{radial} Dunkl processes \cite{GallardoYor05}, and their paths are contained in $C_W$.
Radial Dunkl processes have been studied thoroughly, and many of their properties have been elucidated in \cite{Demni08,Demni09}.
In particular, it is known that the stochastic differential equation for the radial Dunkl process $\hat{\bX}(t)$ is given by
\begin{equation}
\ud\hat{\bX}(t)=\ud \hat{\mb{B}}(t)-\frac{1}{2}\nabla\Phi(\hat{\bX}(t))\ud t,
\end{equation}
with $\hat{\mb{B}}(t)$ an $N$-dimensional Brownian motion,
\begin{equation}
\Phi(\bx):=-\log w_\beta(\bx),
\end{equation}
and
\begin{equation}
w_\beta(\bx):=\prod_{\balpha\in R_+}|\balpha\cdot\bx|^{\beta k(\balpha)}.
\end{equation}
It is also known that for $\beta\geq 1$ and $k(\balpha)\geq 1$ for all $\balpha\in R$, radial Dunkl processes do not hit the boundaries of $C_W$ almost surely, and that there exists a unique strong solution of this SDE whenever $\beta k(\balpha)>0$ for every root.

In the same way that one can isolate the continuous part of Dunkl processes, one can isolate the discontinuous part. The first steps in this direction were given in \cite{GallardoYor06A} for the rank-one case, and the corresponding multidimensional generalization was carried out in \cite{Chybiryakov08}. 

In this paper, however, we focus on the Dunkl jump process, which can be described as a continuous-time stochastic process on the group $W$, with the property that it becomes an inhomogeneous Poisson random walk when $\beta>1$. By considering the forward Fokker-Planck equation of the process,
\begin{eqnarray}
\frac{\partial}{\partial t}p(t,\by|\bx)&=&\frac{1}{2}\sum_{i=1}^N\frac{\partial^2}{\partial y^2}p(t,\by|\bx)-\frac{\beta}{2}\sum_{\balpha\in R_+}k(\balpha)\frac{\balpha\cdot\nabla p(t,\by|\bx)}{\balpha\cdot\by}\nonumber\\
&&\label{eq:ForwardFPE}+\frac{\beta}{2}\sum_{\balpha\in R_+}k(\balpha)\frac{\|\balpha\|^2}{2}\frac{p(t,\by|\bx)+p(t,\sigma_{\balpha}\by|\bx)}{(\balpha\cdot\by)^2},
\end{eqnarray}
one finds that the probability density of a jump from $\bx$ (such that $\bx\cdot\balpha\neq 0$) to a point $\by$ at a non-zero distance in an interval $\ud t$ is given by
\begin{equation}\label{eq:levykernel}
p(\ud t,\by|\bx)=\frac{\beta}{2}\sum_{\balpha\in R_+}k(\balpha)\frac{\|\balpha\|^2}{2}\frac{\delta(\sigma_{\balpha}\by-\bx)}{(\balpha\cdot\bx)^2}\ud t.
\end{equation}
This follows immediately from the L{\'e}vy measure of the process, as stated in \cite{GallardoYor06B,Meyer67}. From this expression it is clear that, if a jump occurs, the arrival point must be one of the reflections of the vector $\bx$ in the direction of one of the root vectors $\balpha$. Moreover, if we denote the Dunkl process by $\bX(t)$ and impose the initial condition $\bX(0)=\bx_0$, the probability rate that any jump occurs in the interval $[t,t+\ud t)$ is given by the integral
\begin{eqnarray}
\Lambda_\beta(t|\bx_0)&=&\int_{\mbb{R}^N}\int_{\mbb{R}^N}\frac{\beta}{2}\sum_{\balpha\in R_+}k(\balpha)\frac{\|\balpha\|^2}{2}\frac{\delta(\sigma_{\balpha}\by-\bx)}{(\balpha\cdot\bx)^2}\intd^N \by\ p(t,\bx|\bx_0)\intd^N \bx\nonumber\\
&=&\int_{C_W}\frac{\beta}{2}\sum_{\balpha\in R_+}k(\balpha)\frac{\|\balpha\|^2}{2}\frac{\hat{p}(t,\bx|\bx_0)}{(\balpha\cdot\bx)^2}\intd^N \bx>0.\label{eq:totalrate}
\end{eqnarray}
Note the change in notation: henceforth, $\bx_0$ will represent the initial configuration, and $\bx$ the configuration at time $t$. It was shown in \cite{GallardoYor06B,Lepingle12} that this integral converges whenever $\beta k(\balpha)>1$, and diverges otherwise. The reason for this is that, due to the presence of the weight function $w_\beta(\bx)$ in \eqref{eq:TPD}, the behavior of the singularity near $\balpha\cdot\bx=0$ goes like $|\balpha\cdot\bx|^{\beta k(\balpha)-2}$, so the singularity is integrable whenever $\beta k(\balpha)>1$ for all $\balpha$.

\section{The Dunkl jump process}\label{sec:jumpprocess}

In order to construct the jump process, we recall the basic properties of the jumps and we recast the main result from \cite{Chybiryakov08} (Theorem~19) in a more familiar form. Since we have established that the jump rate is finite when $\beta k(\balpha)>1$ for all $\balpha$, it is clear that there exist finite, non-zero time intervals between jumps in this regime. Using this fact, we can prove the following statement.

\begin{lemma}\label{jumplemma}
Suppose that $\beta k(\balpha)>1$ for every root $\balpha$. Then, for every Dunkl process $\bX(t)$ and its corresponding radial part $\hat{\bX}(t)$, there exists a Poisson random walk $\rho(t)$ on the Weyl group $W$ such that
\begin{equation}\label{eq:jumplemma}
\bX(t)=\rho(t)\hat{\bX}(t),
\end{equation}
where the equality holds in law. Moreover, if for every interval $[S,T)$ without jumps the Brownian motions $\mb{B}(t)$ and $\hat{\mb{B}}(t)$ which drive $\bX(t)$ and $\hat{\bX}(t)$ respectively are related by
\begin{equation}
\mb{B}(t)-\mb{B}(S)=\rho(t)(\hat{\mb{B}}(t)-\hat{\mb{B}}(S))
\end{equation}
for $t\in[S,T)$, then the equality holds pathwise.
\end{lemma}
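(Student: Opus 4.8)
The plan is to recast the skew-product decomposition of \cite{Chybiryakov08} by removing the jumps of the Dunkl process one at a time: one identifies the ``unfolded'' trajectory with a radial Dunkl process and recovers $\rho(t)$ from the labels of the successive jumps. First I would record the jump structure. By \eqref{eq:levykernel} each jump of $\bX(t)$ sends the state from $\bX(t^-)$ to one of its reflections $\sigma_{\bxi}\bX(t^-)$, $\bxi\in R$, and by \eqref{eq:totalrate} the total jump intensity along the whole trajectory is finite precisely because $\beta k(\balpha)>1$ for every root; hence the jump times form a locally finite increasing sequence $0<T_1<T_2<\cdots$ with $T_n\uparrow\infty$ almost surely. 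On each interval $(T_{n-1},T_n)$ (set $T_0:=0$) the process has no jumps, so the difference term in \eqref{eq:DunklLaplacian} is inactive and $\bX$ solves the continuous SDE $\ud\bX(t)=\ud\mb{B}(t)-\tfrac12\nabla\Phi(\bX(t))\,\ud t$, i.e.\ the same equation as its radial part, now run in whichever Weyl chamber contains $\bX(T_{n-1})$. The one algebraic input is that $w_\beta$, hence $\Phi$, is $W$-invariant, so $\nabla\Phi(\rho\bx)=\rho\,\nabla\Phi(\bx)$ for every $\rho\in W$.

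Next, let $\pi\colon\mbb{R}^N\to C_W$ be the folding map sending $\bx$ to the unique point of $W\bx\cap C_W$, and put $\hat{\bX}(t):=\pi(\bX(t))$. Since $\pi$ is constant on $W$-orbits, $\hat{\bX}$ does not jump at the $T_n$; and since $\pi$ restricted to a single chamber is an isometry, applying the constant rotation $\rho(T_{n-1})^{-1}$ to the SDE on $(T_{n-1},T_n)$ and using the equivariance of $\nabla\Phi$ shows that $\hat{\bX}$ solves $\ud\hat{\bX}(t)=\ud\hat{\mb{B}}(t)-\tfrac12\nabla\Phi(\hat{\bX}(t))\,\ud t$ with $\hat{\mb{B}}(t)-\hat{\mb{B}}(T_{n-1}):=\rho(T_{n-1})^{-1}(\mb{B}(t)-\mb{B}(T_{n-1}))$. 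Gluing the pieces, $\hat{\mb{B}}$ is a continuous local martingale with $\langle\hat{\mb{B}}^i,\hat{\mb{B}}^j\rangle_t=\delta_{ij}t$, hence a Brownian motion by L\'evy's characterization, so $\hat{\bX}$ is a radial Dunkl process. Here I rely on the non-collision facts recalled above, which in the regime of interest guarantee that the trajectory a.s.\ meets the walls of $C_W$ only on a Lebesgue-null set of times, so that $\pi$ is well behaved along the path and the element $\rho(t)\in W$ defined by $\bX(t)=\rho(t)\hat{\bX}(t)$ is unambiguous. This $\rho(t)$ is piecewise constant, equal to $\rho(T_{n-1})$ on $[T_{n-1},T_n)$, and at $T_n$ it is updated by $\rho(T_n)=\sigma_{\bxi_n}\rho(T_n^-)=\rho(T_n^-)\sigma_{\balpha_n}$, where $\bxi_n$ is the ambient root of the $n$-th jump and $\balpha_n:=\rho(T_n^-)^{-1}\bxi_n$, taken in $R_+$ after a harmless sign change; this is precisely a right multiplication by a reflection, matching the form of \eqref{eq:master}.

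It remains to identify the law of $\rho$ and to upgrade $\bX=\rho\hat{\bX}$ to a pathwise identity under the stated coupling. For the law, I would invoke the L\'evy system of the Dunkl process (as in \cite{GallardoYor06B,Meyer67}): the $\mcl{F}_{t^-}$-intensity of a jump of $\bX$ along the ambient root $\bxi=\rho(t^-)\balpha$ equals $\tfrac{\beta}{4}k(\balpha)\|\balpha\|^2(\bxi\cdot\bX(t^-))^{-2}=\tfrac{\beta}{4}k(\balpha)\|\balpha\|^2(\balpha\cdot\hat{\bX}(t^-))^{-2}$, which depends on the past only through the radial path; hence, conditionally on $\{\hat{\bX}(s)\}_{s\ge0}$, the jump times of $\rho$ form an inhomogeneous Poisson process and each jump right-multiplies $\rho$ by $\sigma_\balpha$ at rate $\lambda_\beta(t,\balpha|\bx_0)$, so $\rho$ is a Poisson random walk on $W$ and $\bX$ and $\rho\hat{\bX}$ have the same law. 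For the pathwise statement, impose $\mb{B}(t)-\mb{B}(S)=\rho(t)(\hat{\mb{B}}(t)-\hat{\mb{B}}(S))$ on jump-free intervals; then on $[T_{n-1},T_n)$ both $\bX$ and $\rho(T_{n-1})\hat{\bX}$ solve the same SDE $\ud Z=\ud\mb{B}-\tfrac12\nabla\Phi(Z)\,\ud t$ (again by equivariance of $\nabla\Phi$, together with $\rho(T_{n-1})\,\ud\hat{\mb{B}}=\ud\mb{B}$ on that interval), so by the pathwise uniqueness recalled above, valid for $\beta k(\balpha)>0$, they agree once they agree at $T_{n-1}$; since at $T_n$ both sides are multiplied by $\sigma_{\bxi_n}$, an induction over the jump intervals closes the argument. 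The step I expect to be the crux is this identification of the conditional law of $\rho$ from the Dunkl L\'evy system — making rigorous that, conditionally on $\hat{\bX}$, the jump times and directions of $\rho$ carry only ``fresh'' Poisson randomness — together with the measure-theoretic bookkeeping (wall avoidance, well-definedness of $\pi$ and $\rho$ along the path, gluing the SDE pieces) needed to turn the one-jump-at-a-time construction into a clean argument.
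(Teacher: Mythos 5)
Your proposal is correct and follows essentially the same route as the paper: an induction over the (locally finite, since $\beta k(\balpha)>1$) jump times, using the $W$-equivariance of the drift $\nabla\Phi$ together with pathwise uniqueness of the radial SDE on each jump-free interval, and the L\'evy system to identify the jump intensities. The only cosmetic difference is the direction of construction — you fold $\bX$ into $C_W$ to produce $\hat{\bX}$, whereas the paper unfolds $\hat{\bX}$ by the accumulated reflections to recover $\bX$ — but the key inputs and the inductive structure are the same.
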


\begin{proof}
Without loss of generality, we set $\bx_0\in C_W$, fix $\omega\in\Omega$, and proceed by induction, while keeping in mind that all objects in this derivation are functions of $\omega$. Clearly, $\bX(t)=\hat{\bX}(t)$ and $\rho(t)=\text{id}$ before the first jump, and $\hat{\bX}(t)$ is the unique strong solution if the radial SDE with respect to $\omega$. Now, suppose that after the $n$-th jump, \eqref{eq:jumplemma} holds, and denote the (random) time of said jump by $t_n$. The timing of the jumps is decided by an inhomogeneous Poisson process with rate function $\Lambda_\beta(t,\bx_0)$. The next jump occurs at time $t_{n+1}$, and we know by \eqref{eq:levykernel} that there exists a root $\balpha_{n+1}$ such that
\begin{equation}
\bX(t_{n+1})=\sigma_{\balpha_{n+1}}\bX(t_{n+1}^-)=\sigma_{\balpha_{n+1}}\rho(t_{n+1}^-)\hat{\bX}(t_{n+1}^-)=\sigma_{\balpha_{n+1}}\rho(t_{n+1}^-)\hat{\bX}(t_{n+1})
\end{equation}
by the induction hypothesis and the continuity of $\hat{\bX}(t)$. Then we can write for $t_{n+1}<t<t_{n+2}$,
\begin{align}
\bX(t)&=\sigma_{\balpha_{n+1}}\rho(t_{n+1}^-)\Big[\hat{\mb{B}}(t)+\frac{\beta}{2}\int_0^t\sum_{\bxi\in R_+}k(\bxi)\frac{\bxi}{\bxi\cdot\hat{\mb{X}}(s)}\intd s\Big]\notag\\
&=\sigma_{\balpha_{n+1}}\rho(t_{n+1}^-)\hat{\mb{B}}(t)+\frac{\beta}{2}\int_0^t\sum_{\bxi\in R_+}k(\bxi)\frac{\sigma_{\balpha_{n+1}}\rho(t_{n+1}^-)\bxi}{\bxi\cdot\hat{\mb{X}}(s)}\intd s\notag\\
&=\sigma_{\balpha_{n+1}}\rho(t_{n+1}^-)\hat{\mb{B}}(t)+\frac{\beta}{2}\int_0^t\sum_{\bxi\in R_+}k(\bxi)\frac{\bxi}{\bxi\cdot\mb{X}(s)}\intd s.
\end{align}
For the last line, we have used the substitution $\bxi^\prime=\sigma_{\balpha_{n+1}}\rho(t_{n+1}^-)\bxi$. Then it suffices to set $\rho(t)=\sigma_{\balpha_{n+1}}\rho(t_{n+1}^-)\in W$ and $\mb{B}(t)=\sigma_{\balpha_{n+1}}\rho(t_{n+1}^-)\hat{\mb{B}}(t)=\rho(t)\hat{\mb{B}}(t)$ for $t_{n+1}<t<t_{n+2}$ to obtain the result.
\end{proof}

\begin{definition}
We define the Dunkl jump process as the Poisson random walk $\rho(t)$ in Lemma~\ref{jumplemma}.
\end{definition}

Since the random timing of each jump is given by the jump rate $\Lambda_\beta(t,\bx_0)$, it only remains to calculate the rate corresponding to each of the possible jumps. From \eqref{eq:levykernel}, we see that the probability rate of a jump along the root $\balpha$ given $\hat{\bX}(t)=\bx$ and $\rho(t)=\tau$ (that is, $\bX(t)=\tau\bx$) with $\tau\in W$, $\bx\in C_W$ is given by
\begin{equation}
\frac{\beta\|\balpha\|^2}{4}\frac{k(\balpha)}{(\balpha\cdot\tau\bx)^2}=\frac{\beta\|\balpha\|^2}{4}\frac{k(\balpha)}{(\tau^{-1}\balpha\cdot\bx)^2},
\end{equation}
so by using Lemma~\ref{jumplemma} we obtain the rate function
\begin{equation}\label{eq:jumprates}
\lambda_\beta(t,\tau^{-1}\balpha|\bx_0)=\frac{\beta\|\balpha\|^2}{4}\int_{C_W}\frac{k(\balpha)}{(\tau^{-1}\balpha\cdot\bx)^2}\hat{p}(t,\bx|\bx_0)\intd^N \bx>0.
\end{equation}
This is, then, the probability rate that the jump process makes a transition where $\rho(t)$ goes from $\tau$ to $\sigma_{\balpha} \tau$. Note that
\begin{equation}\label{eq:transitionratecoefficientssum}
\sum_{\balpha\in R+}\lambda_\beta(t,\balpha|\bx_0)=\Lambda_\beta(t|\bx_0),
\end{equation}
so all of these rate functions converge whenever $\beta k(\balpha)>1$ for all $\balpha \in R$ by \cite{GallardoYor06B}. We can now derive the master equation by defining the function $P^{\mcl{J}}_{\beta}(t,\tau|\bx_0):[0,\infty)\times W\times C_W\to\mbb{R}_{\geq 0}$ as the probability of finding the jump process at the reflection group element $\tau\in W$ at time $t\in[0,\infty)$ when the radial process starts from $\bx_0\in C_W$, with the constraint that $\sum_{\tau\in W}P^{\mcl{J}}_{\beta}(t,\tau|\bx_0)=1$. 

\begin{proof}[Proof of Lemma~\ref{lm:masterequation}]
Partition the interval $[0,t)$ by setting $t_i=it/m$, $i=0,\ldots,m$, $\Delta t=t/m$ and set the notation $\sigma_i=\sigma_{\balpha_i}$. Finally, set $\tau_i=\sigma_i \sigma_{i-1}\cdots\sigma_1$, with $\tau_0=\text{id}$. Then, the probability that $\rho(t)$ performs exactly $n$ jumps along the roots $\balpha_1,\ldots,\balpha_n$ in that order within the interval $[0,t)$ is given by the limit
\begin{align}
&\lim_{m\to\infty}\int_{C_W^m}\sum_{1\leq j_1<\cdots<j_n\leq m}\prod_{\substack{l=1\\l\notin\{j_i\}_{i=1}^n}}^m\Big[1-\frac{\beta}{2}\sum_{\bxi\in R_+}\kappa(\bxi)\frac{\|\bxi\|^2}{2}\frac{\Delta t}{(\bxi\cdot\by_l)^2}\Big]\nonumber\\
&\quad\quad\times\prod_{l=1}^n\frac{\beta}{2}\kappa(\balpha_l)\frac{\|\balpha_l\|^2}{2}\frac{\Delta t}{(\balpha_l\cdot\tau_{l-1}\by_{j_l})^2}\times\prod_{l=1}^m\hat{p}(t_l,\by_{l}|\bx_0)\intd^N \by_l\nonumber\\
&=\mnm{e}^{-\int_0^t{\Lambda_\beta}(s|\bx_0)\intd s}\int_{0\leq s_1<\cdots<s_n\leq t}\prod_{j=1}^n{\lambda_\beta}(s_j,\tau_{j-1}^{-1}\balpha_j|\bx_0)\intd s_j.
\end{align}
Then, the probability $P^{\mcl{J}}_{\beta}(t,\tau|\bx_0)$ is given by the sum over all possible sequences of reflections such that $\tau=\tau_n=\sigma_{n}\cdots\sigma_1$ for $n$ a non-negative integer, namely,
\begin{align}
P^{\mcl{J}}_{\beta}(t,\tau|\bx_0)=&\mnm{e}^{-\int_0^t{\Lambda_\beta}(s|\bx_0)\intd s}\nonumber\\
&\times\sum_{n=0}^\infty\sum_{\substack{\{\balpha_i\in R_+\}_{i=1}^n:\\\tau_n=\tau}}\int_{0\leq s_1<\cdots<s_n\leq t}\prod_{j=1}^n{\lambda_\beta}(s_j,\tau_{j-1}^{-1}\balpha_j|\bx_0)\intd s_j.
\end{align}
Differentiating with respect to time yields
\begin{align}
\frac{\partial}{\partial t}P^{\mcl{J}}_{\beta}(t,\tau|\bx_0)=&\mnm{e}^{-\int_0^t{\Lambda_\beta}(s|\bx_0)\intd s}\sum_{n=1}^\infty\sum_{\balpha\in R_+}{\lambda_\beta}(t,\tau_{n-1}^{-1}\balpha|\bx_0)\nonumber\\
&\times\sum_{\substack{\{\balpha_i\in R_+\}_{i=1}^{n-1}:\\\tau_{n-1}=\sigma_{\balpha}\tau}}\int_{0\leq s_1<\cdots<s_{n-1}\leq t}\prod_{j=1}^{n-1}{\lambda_\beta}(s_j,\tau_{j-1}^{-1}\balpha_j|\bx_0)\intd s_j\nonumber\\
&-{\Lambda_\beta}(t|\bx_0)P^{\mcl{J}}_{\beta}(t,\tau|\bx_0)\nonumber\\
=&\mnm{e}^{-\int_0^t{\Lambda_\beta}(s|\bx_0)\intd s}\sum_{\balpha\in R_+}{\lambda_\beta}(t,\tau^{-1}\sigma_{\balpha}\balpha|\bx_0)\nonumber\\
&\times\sum_{n=1}^\infty\sum_{\substack{\{\balpha_i\in R_+\}_{i=1}^{n-1}:\\\tau_{n-1}=\sigma_{\balpha}\tau}}\int_{0\leq s_1<\cdots<s_{n-1}\leq t}\prod_{j=1}^{n-1}{\lambda_\beta}(s_j,\tau_{j-1}^{-1}\balpha_j|\bx_0)\intd s_j\nonumber\\
&-{\Lambda_\beta}(t|\bx_0)P^{\mcl{J}}_{\beta}(t,\tau|\bx_0)\nonumber\\
=&\sum_{\balpha\in R_+}{\lambda_\beta}(t,\tau^{-1}\balpha|\bx_0)P^{\mcl{J}}_{\beta}(t,\sigma_{\balpha}\tau|\bx_0)-{\Lambda_\beta}(t|\bx_0)P^{\mcl{J}}_{\beta}(t,\tau|\bx_0).
\end{align}
Finally, we can simplify the first term by recalling that $\sigma_{\balpha}\tau=\tau \sigma_{\tau^{-1}\balpha}$, yielding
\begin{align}
\sum_{\balpha\in R_+}{\lambda_\beta}(t,\tau^{-1}\balpha|\bx_0)P^{\mcl{J}}_{\beta}(t,\sigma_{\balpha}\tau|\bx_0)&=\frac{1}{2}\sum_{\balpha\in R}{\lambda_\beta}(t,\tau^{-1}\balpha|\bx_0)P^{\mcl{J}}_{\beta}(t,\tau \sigma_{\tau^{-1}\balpha}|\bx_0)\notag\\
&=\sum_{\balpha\in R_+}{\lambda_\beta}(t,\balpha|\bx_0)P^{\mcl{J}}_{\beta}(t,\tau \sigma_{\balpha}|\bx_0).
\end{align}
The second equality follows from the substitution $\balpha^\prime=\tau^{-1}\balpha$ and from the invariance of $R$ relative to the action of $W$. Inserting this expression in the master equation gives the desired result.
\end{proof}

It is clear that the master equation preserves the total sum of the probabilities, and that a possible equilibrium state is that in which $P^{\mcl{J}}_{\beta}(t,\tau|\bx_0)=1/|W|$, as it makes the rhs vanish. Moreover, it is easy to see that \eqref{eq:master} is stable, in the sense that its rhs is given by a linear and negative semidefinite operator mapping the space of real functions on $W$ to itself. To see this, we consider arbitrary functions $f,g:W\to\mbb{R}$ and define the inner product 
\begin{equation}
(f,g):=\sum_{\tau\in W}f(\tau)g(\tau).
\end{equation}
Then, we see that, if we define the operator $\mcl{M}$ so that we can write the rhs of the master equation in the form $\mcl{M}P^{\mcl{J}}_{\beta}$, we get
\begin{align}\label{eq:negsemidef}
(f,\mcl{M}f)&=-\sum_{\tau\in W}f(\tau)\Big[{\Lambda_\beta}(t|\bx_0)f(\tau)-\sum_{\balpha\in R_+}{\lambda_\beta}(t,\balpha|\bx_0)f(\tau\sigma_{\balpha})\Big]\nonumber\\
&=-\sum_{\tau\in W}f(\tau)\sum_{\balpha\in R_+}{\lambda_\beta}(t,\balpha|\bx_0)[f(\tau)-f(\tau\sigma_{\balpha})]\nonumber\\
&=-\frac{1}{2}\sum_{\tau\in W}\sum_{\balpha\in R_+}{\lambda_\beta}(t,\balpha|\bx_0)[f(\tau)-f(\tau\sigma_{\balpha})]^2.
\end{align}
Because ${\lambda_\beta}(t,\balpha|\bx_0)>0$, the expression is negative semidefinite, and it is only equal to zero when $f(\tau)=f(\tau\sigma_{\balpha})$ for every root $\balpha$, which only occurs if $f(\tau)$ is a constant function. Therefore, the equilibrium state is unique. We now turn to the jump counting process to investigate the dynamics of $\rho(t)$ in more detail.

\section{The Dunkl jump counting process\\and the jump rate phase transition}\label{sec:phasetransition}

Denote by $\mcl{N}(t)$ the number of discontinuities in the path of $\bX$, namely, the total number of jumps that $\rho(t)$ has carried out up to time $t$. Since the probability rate of a jump is given by \eqref{eq:totalrate}, whenever $\beta k(\balpha)>1$ for every $\balpha \in R_+$ the jump rate process is an inhomogeneous Poisson process with rate function $\Lambda_\beta(t|\bx_0)$. The following is straightforward.
\begin{lemma}
Denote by $P^{\mcl{N}}_{\beta}(t,n|\bx_0)$ the probability that $\mcl{N}(t)=n$, given that the associated Dunkl process started at $\bx_0$. Then, 
\begin{equation}
\frac{\partial}{\partial t}P^{\mcl{N}}_{\beta}(t,n|\bx_0)={\Lambda_\beta}(t|\bx_0)[P^{\mcl{N}}_{\beta}(t,n-1|\bx_0)-P^{\mcl{N}}_{\beta}(t,n|\bx_0)].
\end{equation}
\end{lemma}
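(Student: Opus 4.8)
The plan is to recognize $\mcl{N}(t)$ as an inhomogeneous Poisson process and then simply differentiate its explicit law. First I would note, using Lemma~\ref{jumplemma} and \eqref{eq:totalrate}, that in the regime $\beta k(\balpha)>1$ for all roots the discontinuities of $\bX(t)$ (equivalently, the jumps of $\rho(t)$) occur at the event times of an inhomogeneous Poisson process whose instantaneous total rate, conditioned on $\rho(t)=\tau$ and $\hat{\bX}(t)=\bx$, is $\sum_{\balpha\in R_+}\lambda_\beta(t,\tau^{-1}\balpha|\bx_0)$. Since $\lambda_\beta(t,\balpha|\bx_0)$ depends on $\balpha$ only through $(\balpha\cdot\bx)^2$, it is invariant under $\balpha\mapsto-\balpha$, so this sum equals $\tfrac12\sum_{\balpha\in R}\lambda_\beta(t,\tau^{-1}\balpha|\bx_0)=\tfrac12\sum_{\balpha\in R}\lambda_\beta(t,\balpha|\bx_0)=\Lambda_\beta(t|\bx_0)$ by the $W$-invariance of $R$; in particular the total rate does not depend on $\tau$. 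Hence $\mcl{N}(t)$ is a genuine inhomogeneous Poisson process with deterministic, locally integrable intensity $\Lambda_\beta(t|\bx_0)$ (integrability being exactly the content of \eqref{eq:totalrate}).

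Second, I would obtain the closed form for $P^{\mcl{N}}_{\beta}(t,n|\bx_0)$, either by quoting the standard expression for an inhomogeneous Poisson law or by rerunning the partition argument from the proof of Lemma~\ref{lm:masterequation}: partitioning $[0,t)$ into $m$ slots, the probability of exactly $n$ jumps is the sum over which $n$ of the $m$ slots contain a jump and over all admissible root labels, but now the root labels contribute a full factor $\Lambda_\beta$ at each jump slot by the computation above and drop out of the combinatorics, so letting $m\to\infty$ gives
\[
P^{\mcl{N}}_{\beta}(t,n|\bx_0)=\mnm{e}^{-M(t)}\int_{0\leq s_1<\cdots<s_n\leq t}\prod_{j=1}^n\Lambda_\beta(s_j|\bx_0)\intd s_j=\mnm{e}^{-M(t)}\frac{M(t)^n}{n!},
\]
where $M(t):=\int_0^t\Lambda_\beta(s|\bx_0)\intd s$.

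Third, I would differentiate this in $t$, using $M'(t)=\Lambda_\beta(t|\bx_0)$ and the product rule:
\[
\frac{\partial}{\partial t}P^{\mcl{N}}_{\beta}(t,n|\bx_0)=\Lambda_\beta(t|\bx_0)\mnm{e}^{-M(t)}\Big[\frac{M(t)^{n-1}}{(n-1)!}-\frac{M(t)^n}{n!}\Big]=\Lambda_\beta(t|\bx_0)\big[P^{\mcl{N}}_{\beta}(t,n-1|\bx_0)-P^{\mcl{N}}_{\beta}(t,n|\bx_0)\big],
\]
with the convention $P^{\mcl{N}}_{\beta}(t,-1|\bx_0)\equiv 0$, which is the claimed identity.

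I do not expect a serious obstacle: the statement is just the Kolmogorov forward equation of an inhomogeneous Poisson process. The only point that genuinely needs an argument is the one handled in the first step — that the \emph{total} jump rate is independent of the current state $\tau\in W$, so that $\mcl{N}(t)$ is truly Poisson rather than merely the counting functional of a more general Markov jump process on $W$; everything after that is the fundamental theorem of calculus applied to the finite cumulative intensity $M(t)$.
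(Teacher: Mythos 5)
Your proposal is correct and follows essentially the same route as the paper: both recognize $\mcl{N}(t)$ as an inhomogeneous Poisson process with intensity $\Lambda_\beta(t|\bx_0)$, obtain the explicit law $P^{\mcl{N}}_{\beta}(t,n|\bx_0)=\mnm{e}^{-M(t)}M(t)^n/n!$ via the same discretization used for Lemma~\ref{lm:masterequation}, and differentiate. Your explicit check that the total exit rate out of any state $\tau\in W$ equals $\Lambda_\beta(t|\bx_0)$ (via the symmetry $\balpha\mapsto-\balpha$ and $W$-invariance of $R$) is a worthwhile refinement of a point the paper only asserts through \eqref{eq:transitionratecoefficientssum}.
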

\begin{proof}
The proof is similar to that of Lemma~\ref{lm:masterequation}. By discretizing the interval $[0,t)$ as before, one can show that
\begin{equation}
P^{\mcl{N}}_{\beta}(t,n|\bx_0)=\frac{1}{n!}\Big[\int_0^t{\Lambda_\beta}(s|\bx_0)\intd s\Big]^n\mnm{e}^{-\int_0^t{\Lambda_\beta}(s|\bx_0)\intd s},
\end{equation}
and taking a time derivative yields the result.
\end{proof}
This result shows that the jump counting process depends solely on the properties of the total jump rate, ${\Lambda_\beta}(t|\bx_0)$. We will focus on this quantity for the rest of the section.

We note now that all of the jump rates $\lambda_\beta(t,\balpha|\bx_0)$ and, by extension, ${\Lambda_\beta}(t|\bx_0)$ satisfy a crucial scaling property which stems from \eqref{eq:TPDScaling}:
\begin{equation}\label{eq:ratescaling}
\lambda_\beta(t,\balpha|\bx_0)=\frac{\beta\|\balpha\|^2}{4}\int_{C_W}\frac{k(\balpha)}{(\balpha\cdot\bx)^2}\hat{p}\Big(1,\frac{\bx}{\sqrt{t}}\Big|\frac{\bx_0}{\sqrt{t}}\Big)\frac{\ud^N \bx}{t^{N/2}}=\frac{1}{t}\lambda_\beta\Big(1,\balpha\Big|\frac{\bx_0}{\sqrt{t}}\Big),
\end{equation}
whenever $t>0$. In fact, we know from the remarks following Theorem~1 in \cite{AndrausMiyashita15} that in general
\begin{equation}\label{eq:jumpratefirstorder}
\lambda_\beta\Big(1,\balpha\Big|\frac{\bx_0}{\sqrt{t}}\Big)=\frac{\beta\|\balpha\|^2|W|}{4c_\beta}\int_{C_W}\frac{k(\balpha)}{(\balpha\cdot\bx)^2}\mnm{e}^{-\|\bx\|^2/2}w_\beta(\bx)\ud^N \bx+O(\|\bx_0\|^2/t),
\end{equation}
so for ${\Lambda_\beta}(t|\bx_0)$, we have
\begin{equation}\label{eq:totalratescaling}
{\Lambda_\beta}(t|\bx_0)=\frac{1}{t}{\Lambda_\beta}(1|\mb{0})+O(\|\bx_0\|^2/t^2)
\end{equation}
assuming that $\bx_0$ belongs to the space spanned by $R_+$. From this relation it is clear that the decrease of the total jump rate in time is extremely slow, at least one order in $t$ slower than the effect of $\bx_0$. Consequently, the expected value of $\mcl{N}(t)$, which corresponds to the time integral of ${\Lambda_\beta}(t|\bx_0)$ itself, shows a logarithmic behavior at long times. This means that there may be meaningful dynamics in the jump process long after the transient effect of the initial configuration disappears, and that this long time behavior could be universal among all types of Dunkl jump processes. It is of interest, then, to calculate ${\Lambda_\beta}(1|\mb{0})$ explicitly; we can obtain an expression for it when $k(\balpha)\equiv1$.
\begin{lemma}\label{lm:totaljumprate}
Suppose that $\beta>1$ and that $k(\balpha)=1$ for all $\balpha\in R_+$. Then, 
\begin{equation}
{\Lambda_\beta}(1|\mb{0})=\frac{\beta |R_+|}{4(\beta-1)}.
\end{equation}
\end{lemma}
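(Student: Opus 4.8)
The plan is to evaluate \eqref{eq:totalrate} directly at $t=1$, $\bx_0=\mb{0}$, $k\equiv1$, reducing it to a single integration by parts together with the classical harmonicity of the product of positive roots. First I would make the transition density explicit: since $\rho\mb{0}=\mb{0}$ for every $\rho\in W$ and $E_\beta(\mb{0},\bx)=1$, one has $\hat{p}(1,\bx|\mb{0})=\frac{|W|}{c_\beta}\mnm{e}^{-\|\bx\|^2/2}w_\beta(\bx)$ with $w_\beta(\bx)=\prod_{\balpha\in R_+}|\balpha\cdot\bx|^{\beta}$, so \eqref{eq:totalrate} becomes $\Lambda_\beta(1|\mb{0})=\frac{\beta|W|}{4c_\beta}\sum_{\balpha\in R_+}\|\balpha\|^2\int_{C_W}(\balpha\cdot\bx)^{-2}\mnm{e}^{-\|\bx\|^2/2}w_\beta(\bx)\intd^N\bx$. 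Both $\mnm{e}^{-\|\bx\|^2/2}w_\beta(\bx)$ and $\sum_{\balpha\in R_+}\|\balpha\|^2(\balpha\cdot\bx)^{-2}$ are $W$-invariant (each $\rho\in W$ is a norm-preserving permutation of $R$, and $(\balpha\cdot\bx)^{-2}$ depends only on $\pm\balpha$), so folding the $|W|$ chambers into $\mbb{R}^N$ gives $\Lambda_\beta(1|\mb{0})=\frac{\beta}{4c_\beta}J$ with $J:=\int_{\mbb{R}^N}\mnm{e}^{-\|\bx\|^2/2}w_\beta(\bx)\sum_{\balpha\in R_+}\|\balpha\|^2(\balpha\cdot\bx)^{-2}\intd^N\bx$. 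It then remains to prove $J=\frac{|R_+|}{\beta-1}c_\beta$.

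To do this I would set $F(\bx):=\mnm{e}^{-\|\bx\|^2/2}w_\beta(\bx)$ and $\mb{V}(\bx):=\sum_{\balpha\in R_+}\balpha/(\balpha\cdot\bx)$, and record that for $k\equiv1$ one has $\nabla w_\beta=\beta w_\beta\mb{V}$, $\nabla\cdot\mb{V}=-\sum_{\balpha\in R_+}\|\balpha\|^2(\balpha\cdot\bx)^{-2}$ and $\bx\cdot\mb{V}=|R_+|$. A short computation then yields $\nabla\cdot(F\mb{V})=F\big(\beta\|\mb{V}\|^2-|R_+|-\sum_{\balpha\in R_+}\|\balpha\|^2(\balpha\cdot\bx)^{-2}\big)$. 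Integrating over $\mbb{R}^N$ (after the usual truncation away from the reflection hyperplanes and at infinity): the field $F\mb{V}$ is $O(|\balpha\cdot\bx|^{\beta-1})$ near each hyperplane $\balpha\cdot\bx=0$ and decays Gaussianly at infinity, so its boundary contributions vanish, while $\nabla\cdot(F\mb{V})$ is $O(|\balpha\cdot\bx|^{\beta-2})$ near the hyperplanes and hence integrable precisely because $\beta>1$. Thus $\int_{\mbb{R}^N}\nabla\cdot(F\mb{V})\intd^N\bx=0$, which gives $J=\beta\int_{\mbb{R}^N}F\|\mb{V}\|^2\intd^N\bx-|R_+|c_\beta$.

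Finally I would expand $\|\mb{V}\|^2=\sum_{\balpha\in R_+}\|\balpha\|^2(\balpha\cdot\bx)^{-2}+\sum_{\balpha\neq\bxi}(\balpha\cdot\bxi)/[(\balpha\cdot\bx)(\bxi\cdot\bx)]$ and argue the cross term is identically zero: with $p(\bx):=\prod_{\balpha\in R_+}(\balpha\cdot\bx)$ one has $\nabla p=p\mb{V}$, hence $\Delta p=p(\|\mb{V}\|^2+\nabla\cdot\mb{V})=p\sum_{\balpha\neq\bxi}(\balpha\cdot\bxi)/[(\balpha\cdot\bx)(\bxi\cdot\bx)]$; but $\Delta p$ is again $W$-skew-invariant, so divisible by $p$, while $\deg\Delta p=\deg p-2<\deg p$, forcing $\Delta p\equiv0$. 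Therefore $\int_{\mbb{R}^N}F\|\mb{V}\|^2\intd^N\bx=J$, the identity of the previous paragraph collapses to $(\beta-1)J=|R_+|c_\beta$, and $\Lambda_\beta(1|\mb{0})=\frac{\beta}{4c_\beta}\cdot\frac{|R_+|c_\beta}{\beta-1}=\frac{\beta|R_+|}{4(\beta-1)}$. Everything here is routine except for two points: the structural input that $p=\prod_{\balpha\in R_+}(\balpha\cdot\bx)$ is harmonic, which is exactly what makes the cross term disappear, and the care needed to kill the boundary terms in the integration by parts --- which is where the hypothesis $\beta>1$ enters and is the source of the pole at $\beta=1$. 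I expect the latter to be the most delicate part to write out cleanly.
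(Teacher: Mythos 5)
Your proof is correct, and it rests on the same structural pillar as the paper's: the harmonicity of the fundamental alternating polynomial $p(\bx)=\prod_{\balpha\in R_+}(\balpha\cdot\bx)$, which is exactly the content of the Dunkl--Xu Lemma~4.4.6 that the paper cites to get $\Delta w_\beta=\beta(\beta-1)w_\beta\sum_{\balpha\in R_+}\|\balpha\|^2(\balpha\cdot\bx)^{-2}$ (you reprove this fact from scratch via skew-invariance and degree counting, which is a nice self-contained touch). Where the two arguments genuinely diverge is in how the integration by parts is organized. The paper writes the integrand as $\frac{1}{\beta(\beta-1)}\Delta w_\beta$, moves the Laplacian onto the Gaussian by integrating by parts twice to get $\int(\|\bx\|^2-N)\mnm{e}^{-\|\bx\|^2/2}w_\beta\intd^N\bx$, and then needs a separate second-moment computation, $\int\|\bx\|^2\mnm{e}^{-\|\bx\|^2/2}w_\beta\intd^N\bx=c_\beta(N+\beta|R_+|)$, obtained by differentiating the scaling relation $f(a)=c_\beta a^{-(N+\beta|R_+|)/2}$. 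You instead integrate by parts only once, with the first-order field $F\mb{V}$, and the Euler identity $\bx\cdot\mb{V}=|R_+|$ closes a linear equation $(\beta-1)J=|R_+|c_\beta$ directly, with no moment computation needed; the homogeneity degree $\beta|R_+|$ of $w_\beta$ enters both proofs, just packaged differently. The delicate point is identical in both routes --- the boundary flux near the hyperplanes $\balpha\cdot\bx=0$ scales like $|\balpha\cdot\bx|^{\beta-1}$ and vanishes precisely for $\beta>1$, which is the source of the pole at $\beta_c=1$ --- and your treatment of it is adequate. Net assessment: same key lemma, different and arguably more economical decomposition; the paper's version is shorter only because it outsources the harmonicity to a citation.
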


\begin{proof}
By definition,
\begin{equation}
{\Lambda_\beta}(1|\mb{0})=\int_{\mbb{R}^N}\sum_{\balpha\in R_+}\frac{\beta\|\balpha\|^2}{4c_\beta(\balpha\cdot\bx)^2}\text{e}^{-\|\bx\|^2/2}w_\beta(\bx)\intd^N \bx.
\end{equation}
Note that, because $k(\balpha)\equiv1$,
\begin{align}
\sum_{i=1}^N\frac{\partial^2}{\partial x_i^2}w_\beta(\bx)&=\sum_{i=1}^N\frac{\partial}{\partial x_i}\Big(w_\beta(\bx)\sum_{\balpha\in R_+}\frac{\beta \alpha_i}{\balpha\cdot\bx}\Big)\notag\\
&=w_\beta(\bx)\beta(\beta-1)\sum_{\balpha\in R_+}\frac{\|\balpha\|^2}{(\balpha\cdot\bx)^2}.
\end{align}
The second line follows from Lemma~4.4.6 in \cite{DunklXu}. Then, we can write
\begin{equation}
{\Lambda_\beta}(1|\mb{0})=\frac{1}{4c_\beta(\beta-1)}\int_{\mbb{R}^N}\text{e}^{-\|\bx\|^2/2}\sum_{i=1}^N\frac{\partial^2}{\partial x_i^2}w_\beta(\bx)\intd^N \bx.
\end{equation}
Recalling that this integral converges for $\beta>1$, we integrate by parts twice to obtain
\begin{align}
{\Lambda_\beta}(1|\mb{0})&=\frac{1}{4c_\beta(\beta-1)}\int_{\mbb{R}^N}(\|\bx\|^2-N)\text{e}^{-\|\bx\|^2/2}w_\beta(\bx)\intd^N \bx\notag\\
&=\frac{1}{4c_\beta(\beta-1)}\int_{\mbb{R}^N}\|\bx\|^2\text{e}^{-\|\bx\|^2/2}w_\beta(\bx)\intd^N \bx-\frac{N}{4(\beta-1)}.
\end{align}
For the last integral, consider the function
\begin{equation}
f(a):=\int_{\mbb{R}^N}\text{e}^{-a\|\bx\|^2/2}w_\beta(\bx)\intd^N \bx,\quad a>0.
\end{equation}
One can readily calculate this integral by setting $\by=\sqrt{a}\bx$ to get
\begin{equation}
f(a)=a^{-(N+\beta |R_+|)/2}\int_{\mbb{R}^N}\text{e}^{-\|\by\|^2/2}w_\beta(\by)\intd^N \by=c_\beta a^{-(N+\beta |R_+|)/2}.
\end{equation}
Then, we see that
\begin{equation}
\int_{\mbb{R}^N}\|\bx\|^2\text{e}^{-\|\bx\|^2/2}w_\beta(\bx)\intd^N \bx=-2\frac{\ud}{\ud a}f(a)\Big|_{a=1}=c_\beta(N+\beta |R|_+).
\end{equation}
Inserting this result in our expression for ${\Lambda_\beta}(1|\mb{0})$ yields the statement.
\end{proof}

It is of great interest to investigate this result from the perspective of infinite particle systems. The radial Dunkl processes of types $A_{N-1}$ and $B_N$ correspond to the Dyson model \cite{Dyson62} and the Wishart-Laguerre processes \cite{Bru,KonigOConnell} respectively. In the particular case $\beta=2$, some of their properties in the infinite-particle limit have been elucidated in \cite{KatoriTanemura07,KatoriTanemura11,OsadaTanemura14}, in particular in the \emph{bulk scaling limit}, where the process time $t$ is scaled linearly with the number of particles, $N$. Under these conditions, we give the proof of Theorem~\ref{th:phasetransition}.

\begin{proof}[Proof of  Theorem~\ref{th:phasetransition}]
We start by using \eqref{eq:totalratescaling} to write
\begin{equation}
\frac{1}{N}\Lambda_\beta^{(R)}(N|\bx_0^{(N)})=\frac{1}{N^2}{\Lambda_\beta}^{(R)}(1|\mb{0})+O(|R_+|/N^3).
\end{equation}
The factor of $|R_+|$ in the correction term appears because the total jump rate is the sum of $|R_+|$ finite jump rates, and we have used the hypothesis $\|\bx_0\|<K$. Then, by Lemma~\ref{lm:totaljumprate} we obtain
\begin{equation}
\frac{1}{N}\Lambda_\beta^{(R)}(N|\bx_0^{(N)})=\frac{1}{N^2}\frac{\beta |R_+|}{4(\beta-1)}+O(|R_+|/N^3).
\end{equation}
For the case $R=A_{N-1}$, we have $|A_{N-1,+}|=N(N-1)/2$, so
\begin{equation}
\frac{1}{N}\Lambda_\beta^{(A)}(N|\bx_0^{(N)})=\frac{\beta N(N-1)}{8N^2(\beta-1)}+O(N^{-1})\stackrel{N\to\infty}{\longrightarrow}\frac{\beta}{8(\beta-1)}.
\end{equation}
Similarly, $|B_{N,+}|=N^2$, and
\begin{equation}
\frac{1}{N}\Lambda_\beta^{(B)}(N|\bx_0^{(N)})=\frac{\beta N^2}{4N^2(\beta-1)}+O(N^{-1})\stackrel{N\to\infty}{\longrightarrow}\frac{\beta}{4(\beta-1)},
\end{equation}
as desired.
\end{proof}

This phase transition stands in contrast to that reported in \cite{ValkoVirag09} for the bulk scaling limit of the Dyson model. There, the phase transition corresponds to the asymptotic behavior of the stochastic sine equation, while the phase transition presented here is related to the jump rates, and in turn, to the collision probability in either the Dyson model or the Wishart-Laguerre processes. Because the first collision time with the walls of $C_W$ is almost-surely finite whenever $\beta k(\balpha)<1$ for any $\balpha\in R_+$ \cite{Demni08}, if the corresponding radial Dunkl process hits a wall at least one of the jump rates diverges, which explains the singularity at $\beta_c$ of the total jump rate per particle in the bulk scaling limit.

\section{Relaxation behavior}\label{sec:relaxation}

Due to the long-time behavior of the total jump rate, it is expected that the relaxation to equilibrium be a very slow process. This observation, paired with \eqref{eq:ratescaling} means that the relaxation behavior ultimately depends on the large-$t$ solution of \eqref{eq:master}, in which the effect of the initial configuration plays an important role. We clarify the situation by turning to the proof of Theorem~\ref{th:relaxation}.

\begin{proof}[Proof of Theorem~\ref{th:relaxation}]
We start by modifying Lemma~\ref{lm:masterequation} using \eqref{eq:ratescaling} to derive the relaxation asymptotics. By \eqref{eq:jumpratefirstorder}, we see that
\begin{equation}
\lambda_\beta(t,\balpha|\bx_0)=\frac{1}{t}\lambda_\beta(1,\balpha|\mb{0})+O(\|\bx_0\|^2/t^{2}),
\end{equation}
so we choose a time scale $t_0$ such that $\epsilon^2>\|\bx_0\|^2/t_0$ for $\epsilon>0$. Then, for $t>t_0$, the master equation reads
\begin{align}
\frac{\partial}{\partial t}P^{\mcl{J}}_{\beta}(t,\tau|\bx_0)=&\frac{1}{t}\Big[\sum_{\balpha\in R_+}\lambda_\beta(1,\balpha|\mb{0})P^{\mcl{J}}_{\beta}(t,\tau\sigma_{\balpha}|\bx_0)-\Lambda_\beta(1|\mb{0})P^{\mcl{J}}_{\beta}(t,\tau|\bx_0)\Big]\notag\\
&+O(|R_+|\epsilon^2 t_0/t^{2}).
\end{align}
Let us consider now the initial value problem
\begin{equation}
\frac{\partial}{\partial t}P^{\mcl{L}}_{\beta}(t,\tau|\bx_0)=\frac{1}{t}\Big[\sum_{\balpha\in R_+}\lambda_\beta(1,\balpha|\mb{0})P^{\mcl{L}}_{\beta}(t,\tau\sigma_{\balpha}|\bx_0)-\Lambda_\beta(1|\mb{0})P^{\mcl{L}}_{\beta}(t,\tau|\bx_0)\Big],
\end{equation}
for $t\geq t_0$ and $P^{\mcl{L}}_{\beta}(t_0,\tau|\bx_0)=P^{\mcl{J}}_{\beta}(t_0,\tau|\bx_0)$ for every $\tau\in W$.  Let us define the matrix
\begin{equation}
M_{\tau\nu}:=\sum_{\balpha\in R_+}\lambda_\beta(1,\balpha|\mb{0})\delta_{\tau\sigma_{\balpha},\nu}-\Lambda_\beta(1|\mb{0})\delta_{\tau,\nu},
\end{equation}
where $\delta_{\tau,\nu}=1$ when $\tau=\nu$ and zero otherwise for $\tau,\nu\in W$ (the Kronecker delta function). Then, we can rewrite the master equation above as
\begin{equation}
\frac{\partial}{\partial t}P^{\mcl{L}}_{\beta}(t,\tau|\bx_0)=\frac{1}{t}\sum_{\nu\in W}M_{\tau\nu}P^{\mcl{L}}_{\beta}(t,\nu|\bx_0),
\end{equation}
and we can see that $M_{\tau\nu}$ is negative semidefinite and symmetric, as this is simply the matrix form of a particular case of the operator $\mcl{M}$ defined previously. Then, $M_{\tau\nu}$ is diagonalizable and all of its eigenvalues are non-positive, so we decompose $P^{\mcl{L}}_{\beta}(t,\nu|\bx_0)$ in terms of the orthonormal eigenvector basis $\{\phi_i\}_{i=1}^{|W|}$ of $M_{\tau\nu}$ with eigenvalue $r_i\leq 0$:
\begin{equation}
P^{\mcl{L}}_{\beta}(t,\tau|\bx_0)=\sum_{i=1}^{|W|}K_{i}(\bx_0) g_{r_i}(t)\phi_{i}(\tau).
\end{equation}
With this, we obtain
\begin{equation}
\sum_{i=1}^{|W|}K_{i}(\bx_0) \phi_{i}(\tau)\frac{\ud}{\ud t}g_{r_i}(t)=\sum_{i=1}^{|W|}K_{i}(\bx_0) g_{r_i}(t) \frac{r_i}{t}\phi_{i}(\tau),
\end{equation}
and by the orthogonality of the eigenvectors, we obtain the following differential equation for the time-dependent part,
\begin{equation}
\frac{\ud}{\ud t}g_{r_i}(t)=\frac{r_i}{t}g_{r_i}(t),
\end{equation}
which is readily integrated to obtain
\begin{equation}
g_{r_i}(t)=\Big(\frac{t}{t_0}\Big)^{r_i}.
\end{equation}
Here, we have chosen the integration constant to be one, as the initial condition is imposed on the constants $K_{i}(\bx_0)$,
\begin{equation}
P^{\mcl{L}}_{\beta}(t_0,\tau|\bx_0)=\sum_{j=1}^{|W|}K_{j}(\bx_0) \phi_{j}(\tau), 
\end{equation}
so we write
\begin{equation}\label{eq:relaxationcoefficients}
\sum_{\tau\in W}\phi_i(\tau)P^{\mcl{J}}_{\beta}(t_0,\tau|\bx_0)=\sum_{j=1}^{|W|}K_{j}(\bx_0) \sum_{\tau\in W}\phi_i(\tau)\phi_j(\tau)=K_{i}(\bx_0).
\end{equation}
In the same way as in \eqref{eq:negsemidef}, one can show that there is a unique eigenvector of $M_{\tau\nu}$ for the eigenvalue $0$, given by $1/\sqrt{|W|},$ so
\begin{equation}
\sum_{\tau\in W}\frac{1}{\sqrt{|W|}}P^{\mcl{J}}_{\beta}(t_0,\tau|\bx_0)=\frac{1}{\sqrt{|W|}}
\end{equation}
because $P^{\mcl{J}}_{\beta}(t_0,\tau|\bx_0)$ is normalized, and finally, we obtain
\begin{equation}\label{eq:longtimesolution}
P^{\mcl{L}}_{\beta}(t,\tau|\bx_0)=\frac{1}{|W|}+\sum_{i:r_i<0}K_i(\bx_0) \Big(\frac{t}{t_0}\Big)^{r_i} \phi_i(\tau).
\end{equation}
Now, from the time derivative of the difference $P^{\mcl{J}}_{\beta}(t,\tau|\bx_0)-P^{\mcl{L}}_{\beta}(t,\tau|\bx_0)$ for $t>t_0$, we obtain
\begin{equation}
P^{\mcl{J}}_{\beta}(t,\tau|\bx_0)-P^{\mcl{L}}_{\beta}(t,\tau|\bx_0)=O(|R_+|\epsilon^2 t_0/t).
\end{equation}
We see that the correction is of order $\epsilon^2/t$. This means that $P^{\mcl{J}}_{\beta}(t,\tau|\bx_0)$ follows closely the behavior of $P^{\mcl{L}}_{\beta}(t,\tau|\bx_0)$, and therefore relaxes to the equilibrium configuration by the power law given by \eqref{eq:longtimesolution} if the least negative eigenvalue $r$ satisfies $|r|<1$, or by a power law with exponent $-1$ otherwise.
\end{proof}

In spite of this theorem, the speed of relaxation remains unclear without knowing the eigenvalues of $M_{\tau\nu}$. We can make a few immediate observations which give some insight into the structure of the master equation at long times. Let us define the matrix $\tilde{M}_{\tau\nu}$ by
\begin{equation}
\tilde{M}_{\tau\nu}:=\sum_{\balpha\in R_+}\lambda_\beta(1,\balpha|\mb{0})\delta_{\tau\sigma_{\balpha},\nu}.
\end{equation}
\begin{lemma}\label{lm:symmetricspectrum}
The matrix $\tilde{M}_{\tau\nu}$ has a symmetric spectrum, namely, if $r$ is one of its eigenvalues, then $-r$ is an eigenvalue as well.
\end{lemma}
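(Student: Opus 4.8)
The plan is to exploit a symmetry of the reflection group $W$ that reverses the sign of the "hopping" part of the master-equation operator. Concretely, I would look for a diagonal sign matrix $S_{\tau\nu} = \veps(\tau)\delta_{\tau,\nu}$ with $\veps(\tau)\in\{+1,-1\}$ such that conjugation by $S$ sends $\tilde{M}$ to $-\tilde{M}$; since $S$ is its own inverse, this immediately gives that $\tilde{M}$ and $-\tilde{M}$ are similar, hence have the same spectrum, which is exactly the claimed symmetry. The natural candidate is the sign character of the Weyl group: set $\veps(\tau) = \det(\tau)$ (equivalently $(-1)^{\ell(\tau)}$, the parity of the length of $\tau$ as a word in the simple reflections, which for $W$ generated by the $\sigma_{\balpha}$ is well-defined). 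Each basis transition in $\tilde{M}$ connects $\tau$ to $\tau\sigma_{\balpha}$, and $\det(\tau\sigma_{\balpha}) = -\det(\tau)$ because every $\sigma_{\balpha}$ is a reflection with determinant $-1$. Therefore $\veps(\tau)\veps(\nu) = -1$ whenever $\tilde{M}_{\tau\nu}\neq 0$, which is precisely the statement that $S\tilde{M}S = -\tilde{M}$ entrywise.

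The steps, in order: first, record that $W$ is generated by the reflections $\sigma_{\balpha}$, $\balpha\in R_+$, so the homomorphism $\det\colon W\to\{\pm1\}$ (or the sign/length character) is well-defined and satisfies $\det(\sigma_{\balpha}) = -1$ for every root. Second, define $\veps(\tau):=\det(\tau)$ and the diagonal matrix $S_{\tau\nu}=\veps(\tau)\delta_{\tau,\nu}$, noting $S^2 = I$. Third, compute $(S\tilde{M}S)_{\tau\nu} = \veps(\tau)\tilde{M}_{\tau\nu}\veps(\nu) = \sum_{\balpha\in R_+}\lambda_\beta(1,\balpha|\mb{0})\veps(\tau)\veps(\nu)\delta_{\tau\sigma_{\balpha},\nu}$, and observe that on the support of the Kronecker delta we have $\nu = \tau\sigma_{\balpha}$, so $\veps(\nu) = \veps(\tau)\det(\sigma_{\balpha}) = -\veps(\tau)$, hence $\veps(\tau)\veps(\nu) = -1$ and $(S\tilde{M}S)_{\tau\nu} = -\tilde{M}_{\tau\nu}$. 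Fourth, conclude: if $\tilde{M}v = rv$ then $\tilde{M}(Sv) = -S(S\tilde{M}S)v = -S\tilde{M}v = -r(Sv)$, and $Sv\neq 0$, so $-r$ is an eigenvalue with the same multiplicity.

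I expect no serious obstacle here; the only point requiring a line of care is the well-definedness of the sign character — one should cite that $W$ is a reflection (Coxeter) group so that $\det\colon W\to\{\pm1\}$ is a genuine homomorphism with all reflections mapping to $-1$, which is standard and already implicit in the paper's setup. A secondary subtlety is that $\tilde{M}$ is real symmetric (inherited from $M$ and the structure in \eqref{eq:negsemidef}, since $\tau\sigma_{\balpha} = \nu \iff \nu\sigma_{\balpha} = \tau$), so its spectrum is real and the pairing $r\leftrightarrow -r$ is a genuine involution on the (real) eigenvalues; this also makes clear that the multiplicity of $r$ equals that of $-r$. Everything else is the short conjugation computation above.
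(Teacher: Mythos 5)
Your proof is correct and takes essentially the same approach as the paper's: both exploit that the sign character $\sgn(\tau)=\det(\tau)$ satisfies $\sgn(\tau\sigma_{\balpha})=-\sgn(\tau)$, so multiplying an eigenvector by $\sgn$ negates the eigenvalue, and your conjugation identity $S\tilde{M}S=-\tilde{M}$ is just the matrix-level packaging of that computation. (One trivial slip: the intermediate expression should read $\tilde{M}(Sv)=S(S\tilde{M}S)v=-S\tilde{M}v=-r(Sv)$, without the extra leading minus sign; the conclusion is unaffected.)
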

\begin{proof}
First, let us suppose that an eigenvector with eigenvalue $r$ is denoted by $\phi_r(\tau)$. Denoting the signature of $\tau\in W$ by $\sgn(\tau)$ (namely, the determinant of its matrix representation), let us consider the vector $\sgn (\tau)\phi_r(\tau)$:
\begin{align}
\sum_{\nu\in W}\tilde{M}_{\tau\nu}\ \sgn (\nu)\phi_r(\nu)&=\sum_{\balpha\in R_+}\lambda_\beta(1,\balpha|\mb{0})\ \sgn(\tau\sigma_{\balpha})\phi_r(\tau\sigma_{\balpha})\notag\\
&=-\sgn(\tau)\sum_{\balpha\in R_+}\lambda_\beta(1,\balpha|\mb{0})\ \phi_r(\tau\sigma_{\balpha})\notag\\
&=-r\ \sgn(\tau)\phi_r(\tau).
\end{align}
It follows that $-r$ is an eigenvalue with eigenvector $\sgn(\tau)\phi_r(\tau)$.
\end{proof}

From this fact, we obtain the most negative eigenvalue of the matrix $M_{\tau\nu}$ as follows.
\begin{lemma}\label{lm:minimumeigvl}
The minimum (most negative) relaxation exponent of a Dunkl jump process with equal multiplicities is given by
\begin{equation}
r_\textnormal{min}=-\frac{\beta|R_+|}{2(\beta-1)}.
\end{equation}
\end{lemma}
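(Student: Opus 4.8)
The plan is to use the decomposition $M_{\tau\nu} = \tilde M_{\tau\nu} - \Lambda_\beta(1|\mb 0)\,\delta_{\tau,\nu}$, where $M_{\tau\nu}$ is the matrix whose eigenvalues are the relaxation exponents $\{r_i\}$ (as in the proof of Theorem~\ref{th:relaxation}), together with Lemma~\ref{lm:symmetricspectrum}. This reduces the problem to finding the largest eigenvalue of $\tilde M_{\tau\nu}$. First I would compute the row sums: for each fixed $\tau\in W$, $\sum_{\nu\in W}\tilde M_{\tau\nu} = \sum_{\balpha\in R_+}\lambda_\beta(1,\balpha|\mb 0) = \Lambda_\beta(1|\mb 0)$ by \eqref{eq:transitionratecoefficientssum}, and since $\tilde M_{\tau\nu}$ is symmetric the column sums equal $\Lambda_\beta(1|\mb 0)$ as well. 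Because all entries of $\tilde M_{\tau\nu}$ are nonnegative, its induced $1$- and $\infty$-norms both equal $\Lambda_\beta(1|\mb 0)$, so $\|\tilde M\|_2 \leq \sqrt{\|\tilde M\|_1\|\tilde M\|_\infty} = \Lambda_\beta(1|\mb 0)$; being symmetric, $\|\tilde M\|_2$ is its spectral radius, hence every eigenvalue of $\tilde M_{\tau\nu}$ lies in $[-\Lambda_\beta(1|\mb 0),\Lambda_\beta(1|\mb 0)]$.

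Next I would observe that the constant vector $\tau\mapsto 1/\sqrt{|W|}$ is an eigenvector of $\tilde M_{\tau\nu}$ with eigenvalue $\Lambda_\beta(1|\mb 0)$, again by the row-sum computation; thus $\Lambda_\beta(1|\mb 0)$ is the largest eigenvalue of $\tilde M_{\tau\nu}$. Lemma~\ref{lm:symmetricspectrum} then forces $-\Lambda_\beta(1|\mb 0)$ to be an eigenvalue of $\tilde M_{\tau\nu}$ too, and by the bound above it must be the smallest one. Since $M_{\tau\nu}$ and $\tilde M_{\tau\nu}$ differ by the scalar $\Lambda_\beta(1|\mb 0)$ on the diagonal they are simultaneously diagonalizable, so the smallest eigenvalue of $M_{\tau\nu}$ is $-\Lambda_\beta(1|\mb 0) - \Lambda_\beta(1|\mb 0) = -2\Lambda_\beta(1|\mb 0)$. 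Finally, Lemma~\ref{lm:totaljumprate} with $k(\balpha)\equiv 1$ gives $\Lambda_\beta(1|\mb 0) = \beta|R_+|/(4(\beta-1))$, whence $r_\textnormal{min} = -\beta|R_+|/(2(\beta-1))$.

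The only step requiring genuine care is establishing that $\Lambda_\beta(1|\mb 0)$ is exactly the spectral radius of $\tilde M_{\tau\nu}$ and is attained — the doubly-stochastic/Perron-type argument above — rather than merely an upper bound; everything else, in particular the passage from the top of the spectrum of $\tilde M$ to the bottom, is handed to us by Lemma~\ref{lm:symmetricspectrum} and the explicit value of $\Lambda_\beta(1|\mb 0)$ from Lemma~\ref{lm:totaljumprate}.
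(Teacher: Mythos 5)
Your proof is correct and follows essentially the same route as the paper: decompose $M_{\tau\nu}=\tilde M_{\tau\nu}-\Lambda_\beta(1|\mb{0})\delta_{\tau,\nu}$, identify $\Lambda_\beta(1|\mb{0})$ as the top eigenvalue of $\tilde M$ with the constant eigenvector, invoke Lemma~\ref{lm:symmetricspectrum} to get $-\Lambda_\beta(1|\mb{0})$ at the bottom, and plug in Lemma~\ref{lm:totaljumprate}. The only cosmetic difference is that you justify the spectral-radius bound $\rho(\tilde M)\leq\Lambda_\beta(1|\mb{0})$ via the row-sum/norm interpolation argument, whereas the paper reads it off from the negative semidefiniteness of $M$ already established in \eqref{eq:negsemidef}; both are valid.
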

\begin{proof}
The relaxation exponents are the eigenvalues of the matrix $M_{\tau\nu}$, which can be rewritten as
\begin{equation}
M_{\tau\nu}=\tilde{M}_{\tau\nu}-\Lambda_\beta(1|\mb{0})\delta_{\tau\nu}.
\end{equation}
Since we know that $M_{\tau\nu}$ is negative semidefinite, and that its largest eigenvalue is zero with eigenvector $\phi_0(\tau)=1/\sqrt{|W|}$, it follows that the largest eigenvalue of the matrix $\tilde{M}_{\tau\nu}$ is exactly $\Lambda_\beta(1|\mb{0})$. By Lemma~\ref{lm:symmetricspectrum}, the vector $\phi_{\textnormal{min}}(\tau)=\sgn(\tau)/\sqrt{|W|}$ gives the minimum eigenvalue of $\tilde{M}_{\tau\nu}$, namely, $-\Lambda_\beta(1|\mb{0})$, so it is clear that
\begin{equation}
\sum_{\nu\in W}M_{\tau\nu}\phi_{\textnormal{min}}(\nu)=-2\Lambda_{\beta}(1|\mb{0})\phi_\textnormal{min}(\tau).
\end{equation}
Using Lemma~\ref{lm:totaljumprate} completes the proof.
\end{proof}
For the rest of the eigenvalues, we need to calculate the individual jump rates given by \eqref{eq:jumprates}; these are difficult to calculate, even in the case $\lambda_\beta(1,\balpha|\mb{0})$, due to the fact that the integrand is not $W$-invariant. However, the calculations carried out up to this point reveal a fact that is meaningful from a physical viewpoint. The equilibrium state of the Dunkl jump process $\phi_0(\tau)=1/\sqrt{|W|}$ indicates a completely symmetric state, and the state with the fastest relaxation exponent, $\phi_\textnormal{min}(\tau)=\sgn(\tau)/\sqrt{|W|}$ indicates a completely \emph{antisymmetric} state. In this sense, the universal behavior of the process consists of moving away from a fermionic configuration and converging to a bosonic configuration.

These observations are in contrast with the behavior of the paths of the corresponding radial Dunkl process: if we take $\beta=2$ we obtain well-known determinantal processes for the root systems of types $A_{N-1}$ and $B_N$ \cite{KatoriTanemura07, KatoriTanemura11}. Therefore, the behavior of these radial Dunkl processes can be viewed as fermionic in nature, and this behavior is underlined by the fact that when $\beta=2$ these processes admit a formulation as free Brownian motions (for the case $A_{N-1}$) or free Bessel processes (for the case $B_N$) conditioned never to collide in the sense of Doob. Furthermore, this is a behavior that persists throughout the process, while the Dunkl jump processes considered here present a dynamical behavior which goes from fermionic to bosonic for all $\beta>1$.

We can extract more information from the asymptotics for large $\beta$. Lemma~\ref{lm:totaljumprate} in the case where $\beta\to\infty$ gives a total jump rate which is positive,
\begin{equation}\label{eq:frozentotalrate}
\lim_{\beta\to\infty}\frac{1}{t}{\Lambda_\beta}(1|\mb{0})=\frac{1}{t}\lim_{\beta\to\infty}\frac{\beta |R_+|}{4(\beta-1)}=\frac{1}{t}\frac{|R_+|}{4},
\end{equation}
and this value is the maximum lower bound of the total jump rate for finite $\beta>1$ at long times. It is known \cite{AndrausMiyashita15, AndrausKatoriMiyashita12, AndrausKatoriMiyashita14, AndrausVoit18, VoitWoerner, AndrausVoit19} that for $\beta\to\infty$, the continuous part of a Dunkl process follows a deterministic path with small Gaussian fluctuations. However, for the Dunkl jump process not only does the jump rate stay bounded, it converges to its minimum value, meaning that the relaxation of the jump process is slowest in this regime. Consequently, the jump rates ${\lambda_\beta}(1,\balpha|\mb{0})$ must also converge to well-defined values. As seen in \cite{AndrausMiyashita15}, the probability density $\hat{p}(1,\bx|\mb{0})$ concentrates at the peak set of the root system in question after scaling by a factor of $\sqrt{\beta}$ \cite{Dunkl89B}. The peak set is the set of vectors which minimize the function
\begin{equation}
\frac{\|\bx\|^2}{2}-\sum_{\balpha\in R_+}k(\balpha)\log|\balpha\cdot\bx|.
\end{equation}
Due to the $W$-invariance of this function, there are $|W|$ such minimizers, so we choose the minimizer which lies in $C_W$ and denote it by $\bz^{(R)}$. By the minimizing condition, $\bz^{(R)}$ satisfies
\begin{equation}\label{eq:peakvector}
\bz^{(R)}=\sum_{\balpha\in R_+}\frac{k(\balpha)\balpha}{\balpha\cdot\bz^{(R)}}.
\end{equation}
Then, we have
\begin{equation}\label{eq:frozenrates}
\lim_{\beta\to\infty}\lambda_\beta(1,\balpha|\mb{0})=\frac{\|\balpha\|^2 k(\balpha)}{4}\lim_{\beta\to\infty}\int_{C_W}\frac{\hat{p}(1,\sqrt{\beta}\bx|\mb{0})}{(\balpha\cdot\bx)^2}\beta^{N/2}\intd^N \bx=\frac{\|\balpha\|^2 k(\balpha)}{4(\balpha\cdot\bz^{(R)})^2},
\end{equation}
and
\begin{equation}
\lim_{\beta\to\infty}\Lambda_\beta(1|\mb{0})=\sum_{\balpha\in R_+}\frac{\|\balpha\|^2 k(\balpha)}{4(\balpha\cdot\bz^{(R)})^2}.
\end{equation}
It is straightforward to check that this expression is consistent with \eqref{eq:frozentotalrate} when $k(\balpha)\equiv1$. We can now prove the following statement on the jump rate asymptotics.

\begin{lemma}\label{lm:coefficientasymptotics}
Define $r^\star:=\min_{\bzeta\in R_+}\bzeta\cdot\bz^{(R)}/\|\bzeta\|\sqrt{\gamma}$, with $\gamma$ given in \eqref{eq:gammadefinition}. There exist constants $M>0$ and $\tilde{C}(\balpha)>0$ such that the jump rates are given by
\begin{equation}
\lambda_\beta(1,\balpha|\mb{0})=\frac{\|\balpha\|^2 k(\balpha)}{4(\balpha\cdot\bz^{(R)})^2}\Big[1+\frac{\tilde{C}(\balpha)}{\beta}+O(\beta^{-2})+O(\mnm{e}^{-M})\Big]
\end{equation}
for $\beta>\min(2M/\gamma r^\star,1)$.
\end{lemma}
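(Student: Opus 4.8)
The plan is to express $\lambda_\beta(1,\balpha|\mb{0})$ as a ratio of Laplace integrals over the peak-set functional and to apply a Watson-type expansion around the peak vector $\bz^{(R)}$. Starting from \eqref{eq:jumprates} with $\tau=\mnm{id}$ and $\bx_0=\mb{0}$, so that $\hat p(1,\bx|\mb{0})=|W|c_\beta^{-1}\mnm{e}^{-\|\bx\|^2/2}w_\beta(\bx)$, I would rescale $\bx\mapsto\sqrt\beta\bx$, use $w_\beta(\sqrt\beta\bx)=\beta^{\beta\gamma/2}w_\beta(\bx)$ and the identical rescaling inside $c_\beta$. Every power of $\beta$, including the explicit factor $\beta$ in \eqref{eq:jumprates}, cancels, leaving
\[
\lambda_\beta(1,\balpha|\mb{0})=\frac{\|\balpha\|^2k(\balpha)}{4}\cdot\frac{\int_{C_W}(\balpha\cdot\bx)^{-2}\mnm{e}^{-\beta G(\bx)}\ud^N\bx}{\int_{C_W}\mnm{e}^{-\beta G(\bx)}\ud^N\bx},\qquad G(\bx):=\frac{\|\bx\|^2}{2}-\sum_{\bzeta\in R_+}k(\bzeta)\log(\bzeta\cdot\bx),
\]
i.e. exactly the peak-set functional. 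Since $G-\tfrac12\|\cdot\|^2$ is convex on the interior of $C_W$ and $G\to+\infty$ on $\partial C_W$ and at infinity, $G$ has a unique minimizer $\bz^{(R)}$, the interior critical point of \eqref{eq:peakvector}, with positive-definite Hessian $H:=\mnm{Hess}\,G(\bz^{(R)})=I+\sum_{\bzeta\in R_+}k(\bzeta)(\bzeta\cdot\bz^{(R)})^{-2}\bzeta\bzeta^\top\succeq I$; dotting \eqref{eq:peakvector} with $\bz^{(R)}$ gives $\|\bz^{(R)}\|^2=\gamma$, so the distance from $\bz^{(R)}$ to each root hyperplane equals $\sqrt\gamma$ times the corresponding cosine appearing in $r^\star$.

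Next I would localize and expand. Convexity of $G-\tfrac12\|\cdot\|^2$ together with $\nabla G(\bz^{(R)})=0$ gives the clean bound $G(\bx)-G(\bz^{(R)})\ge\tfrac12\|\bx-\bz^{(R)}\|^2$ on $C_W$, so the contribution of $C_W\setminus B(\bz^{(R)},\delta)$ to either integral is $O(\mnm{e}^{-\beta\delta^2/2})$ relative to the peak; choosing $\delta$ between $\sqrt{2M/\beta}$ and the minimal wall distance $\sqrt\gamma\,r^\star$ produces a threshold of the stated shape $\beta>\min(2M/\gamma r^\star,1)$ and a residual $O(\mnm{e}^{-M})$. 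On $B(\bz^{(R)},\delta)$ the substitution $\bx=\bz^{(R)}+\by/\sqrt\beta$, a fourth-order expansion of $G$ and a second-order expansion of $(\balpha\cdot\bx)^{-2}$, and Gaussian integration give the standard second-order Laplace formula $\int_B h\,\mnm{e}^{-\beta G}\ud^N\bx=\mnm{e}^{-\beta G(\bz^{(R)})}(2\pi/\beta)^{N/2}(\det H)^{-1/2}\big(h(\bz^{(R)})+\beta^{-1}a_1[h]+O(\beta^{-2})\big)$, the half-integer-order terms vanishing by parity and $a_1[h]$ depending only on $\nabla h,\nabla^2 h,\nabla^3 G,\nabla^4 G$ and $H^{-1}$ at $\bz^{(R)}$. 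Taking the ratio with $h=(\balpha\cdot\cdot)^{-2}$ over $h\equiv 1$ cancels the common prefactor and the $O(\mnm{e}^{-M})$ tails, producing the stated expansion with $\tilde C(\balpha)=(\balpha\cdot\bz^{(R)})^2a_1[(\balpha\cdot\cdot)^{-2}]-a_1[1]$, a computable constant built from $\bz^{(R)}$, $H$ and $\nabla^3 G(\bz^{(R)})$; in the rank-one case it works out to $\tilde C(\balpha)=1/k(\balpha)$.

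The main obstacle is the sign: establishing $\tilde C(\balpha)>0$, since it involves the cubic term of $G$ and is not manifestly positive. I would pursue two complementary routes and hope they combine. Monotonicity: writing $\lambda_\beta(1,\balpha|\mb{0})=\tfrac{\|\balpha\|^2k(\balpha)}{4}\mbb E_{\mu_\beta}[(\balpha\cdot\bX)^{-2}]$ with $\mu_\beta(\ud\bx)\propto\mnm{e}^{-\beta G(\bx)}\ud\bx$ on $C_W$, one has $\partial_\beta\lambda_\beta=-\tfrac{\|\balpha\|^2k(\balpha)}{4}\mnm{Cov}_{\mu_\beta}\big((\balpha\cdot\cdot)^{-2},G\big)$; showing this covariance is positive (plausible as both quantities blow up together near the hyperplane $\balpha\cdot\bx=0$) forces $\lambda_\beta$ to decrease strictly to its limit and hence $\tilde C(\balpha)\ge 0$, and the Gaussian-fluctuation asymptotics of the rescaled radial process from \cite{AndrausMiyashita15} give the leading magnitude of the covariance, hence strictness. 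A global cross-check: for $k\equiv 1$, Lemma~\ref{lm:totaljumprate} gives $\Lambda_\beta(1|\mb{0})=\tfrac{\beta|R_+|}{4(\beta-1)}=\tfrac{|R_+|}{4}\sum_{n\ge 0}\beta^{-n}$, which forces the $\lambda_\infty$-weighted mean of $\{\tilde C(\balpha)\}_{\balpha\in R_+}$ to be exactly $1$; individual positivity then reduces to a quantitative domination of the cubic contribution to each $\tilde C(\balpha)$, using $\bzeta\cdot\bz^{(R)}\ge\|\bzeta\|\sqrt\gamma\,r^\star$ and $H\succeq I$. The remaining steps — tracking constants through the localization to land precisely on $\beta>\min(2M/\gamma r^\star,1)$, and the Gaussian moment bookkeeping in $a_1[h]$ — are routine.
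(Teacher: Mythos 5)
Your overall strategy --- writing $\lambda_\beta(1,\balpha|\mb{0})$ as a ratio of Laplace integrals with phase $G(\bx)=\|\bx\|^2/2-\sum_{\bzeta\in R_+}k(\bzeta)\log(\bzeta\cdot\bx)$, localizing around the peak vector $\bz^{(R)}$, expanding to second order, and taking the ratio so that the Gaussian prefactors and exponential tails cancel --- is essentially the route the paper takes. (The paper shifts the chamber by $\sqrt{\beta}\bz^{(R)}$ and expands the logarithms term by term rather than invoking a Watson-type formula, but the content is the same: the same Hessian $\mb{H}=\mb{I}+\sum_{\bzeta\in R_+}k(\bzeta)\bzeta\bzeta^T/(\bzeta\cdot\bz^{(R)})^2$, the same vanishing of half-integer orders by the symmetry of the localization region, and the same $O(\mnm{e}^{-M})$ tail bound tied to the wall distance through $r^\star$.)

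The genuine gap is the one you flag yourself: positivity of $\tilde C(\balpha)$, which is part of the statement, and neither of your two proposed routes closes it. The covariance route only asserts that the sign of $\mnm{Cov}_{\mu_\beta}\bigl((\balpha\cdot\,\cdot)^{-2},G\bigr)$ is ``plausible,'' and the sum rule from Lemma~\ref{lm:totaljumprate} constrains only the weighted \emph{average} of the $\tilde C(\balpha)$, not each one individually. The paper's resolution is purely algebraic and happens inside the expansion you already set up: collecting the order-$\beta^{-1}$ terms coming from (i) the quadratic term of the prefactor $(1+\balpha\cdot\bx/\sqrt{\beta}\,\balpha\cdot\bz^{(R)})^{-2}$, (ii) the square of the cubic term in the expansion of $\beta\sum_{\bzeta}k(\bzeta)\log(1+\bzeta\cdot\bx/\sqrt{\beta}\,\bzeta\cdot\bz^{(R)})$, (iii) their cross term, and (iv) the quartic term of the log expansion, one finds that (i)--(iii) combine into the manifestly nonnegative form $(\balpha\cdot\bx/\balpha\cdot\bz^{(R)})^2+\tfrac12\bigl(2\balpha\cdot\bx/\balpha\cdot\bz^{(R)}-\tfrac13\sum_{\bzeta}k(\bzeta)(\bzeta\cdot\bx/\bzeta\cdot\bz^{(R)})^3\bigr)^2$, while the quartic term (iv) appears identically in the expansion of the normalization $c_\beta/|W|$ and cancels in the ratio. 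What survives as $\tilde C(\balpha)$ is therefore a Gaussian integral of a nonnegative integrand, giving $\tilde C(\balpha)>0$ with no need for monotonicity or sum-rule arguments. Without this (or an equivalent) bookkeeping step your argument establishes the expansion but not the sign, so as written it does not prove the lemma.
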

\begin{proof}
The proof consists of a large-$\beta$ expansion. For every $\by\in C_W$, let us define $C_W(\by):=\{\bx\in\mbb{R}^N:\bx+\by\in C_W\}$. We also set $\bz:=\bz^{(R)}$ and $\lambda_{\balpha}(\beta):=\lambda_\beta(1,\balpha|\mb{0})$ for simplicity. By \eqref{eq:jumprates}, we have
\[\lambda_{\balpha}(\beta)=\frac{\beta\|\balpha\|^2}{4}\frac{|W|k(\balpha)}{c_\beta}\int_{C_W(\sqrt{\beta}\bz)}\frac{\text{e}^{-\|\bx+\sqrt{\beta}\bz\|^2/2}}{[\balpha\cdot(\bx+\sqrt{\beta}\bz)]^2}w_\beta(\bx+\sqrt{\beta}\bz)\intd^N \bx\]
after shifting the Weyl chamber by $\sqrt{\beta}\bz$. Note that
\begin{align}
[\balpha\cdot(\bx+\sqrt{\beta}\bz)]^2&=\beta(\balpha\cdot\bz)^2\Big(1+\frac{\balpha\cdot\bx}{\sqrt{\beta}\balpha\cdot\bz}\Big)^2,\notag\\
\|\bx+\sqrt{\beta}\bz\|^2&=\|\bx\|^2+\beta\gamma+2\sqrt{\beta}\bx\cdot\bz,\text{ and}\notag
\end{align}
\begin{align}
\sum_{\bzeta\in R_+}k(\bzeta)\log[\bzeta\cdot(\bx+\sqrt{\beta}\bz)]&=\frac{\gamma}{2}\log \beta+\sum_{\bzeta\in R_+}k(\bzeta)\log\bzeta\cdot\bz\notag\\
&\quad+\sum_{\bzeta\in R_+}k(\bzeta)\log\Big(1+\frac{\bzeta\cdot\bx}{\sqrt{\beta}\bzeta\cdot\bz}\Big),\notag
\end{align}
where we note that $\|\bz\|^2=\gamma$. Then, we can write
\begin{align}
\lambda_{\balpha}(\beta)=&\frac{\|\balpha\|^2k(\balpha)}{4(\balpha\cdot\bz)^2}\Big(\frac{\beta}{\text{e}}\Big)^{\beta\gamma/2}w_\beta(\bz)\frac{|W|}{c_\beta}\notag\\
&\times \int_{C_W(\sqrt{\beta}\bz)}\frac{\text{e}^{-\|\bx\|^2/2-\sqrt{\beta}\bx\cdot\bz}}{[1+\frac{\balpha\cdot\bx}{\sqrt{\beta}\balpha\cdot\bz}]^2}\prod_{\bzeta\in R_+}\Big(1+\frac{\bzeta\cdot\bx}{\sqrt{\beta}\bzeta\cdot\bz}\Big)^{\beta k(\bzeta)}\intd^N \bx.\label{eq:mainintegral}
\end{align}
We can divide the integral in two regions: $\bx\in \mcl{I}(\beta):=\{\by\in\mbb{R}^N:|\bxi\cdot\by|<\sqrt{\beta}\bxi\cdot\bz,\ {}^\forall\bxi\in R_+\}$ and $\bx\in C_W(\sqrt{\beta}\bz)\setminus\mcl{I}(\beta)$.
With this, we can use the expansion formula for $\log(1+x)$
\begin{align}\label{eq:logexpansionw}
\sum_{\bzeta\in R_+}k(\bzeta)\log\Big(1+\frac{\bzeta\cdot\bx}{\sqrt{\beta}\bzeta\cdot\bz}\Big)=\sum_{\bzeta\in R_+}k(\bzeta)\sum_{n=1}^\infty \frac{(-1)^{n-1}}{n\beta^{n/2}}\Big(\frac{\bzeta\cdot\bx}{\bzeta\cdot\bz}\Big)^n,
\end{align}
and the expansion
\begin{align}
\Big(1+\frac{\balpha\cdot\bx}{\sqrt{\beta}\balpha\cdot\bz}\Big)^{-2}=\sum_{m=0}^\infty\frac{(-1)^m (m+1)}{\beta^{m/2}} \Big(\frac{\balpha\cdot\bx}{\balpha\cdot\bz}\Big)^m
\end{align}
within $\mcl{I}(\beta)$. This is a convex set, and it is symmetric in the sense that if $\bx\in\mcl{I}(\beta)$, then $-\bx\in\mcl{I}(\beta)$. The integral in this region becomes
\begin{align}
I_1:=&\int_{\mcl{I}(\beta)}\frac{\text{e}^{-\|\bx\|^2/2-\sqrt{\beta}\bx\cdot\bz}}{[1+\frac{\balpha\cdot\bx}{\sqrt{\beta}\balpha\cdot\bz}]^2}\prod_{\bzeta\in R_+}\Big(1+\frac{\bzeta\cdot\bx}{\sqrt{\beta}\bzeta\cdot\bz}\Big)^{\beta k(\bzeta)}\intd^N \bx\notag\\
=&\int_{\mcl{I}(\beta)}\text{e}^{-\bx^T\mb{H}\bx/2}\sum_{m=0}^\infty\frac{(-1)^m (m+1)}{\beta^{m/2}} \Big(\frac{\balpha\cdot\bx}{\balpha\cdot\bz}\Big)^m\notag\\
&\times\exp\Big[\sum_{\bzeta\in R_+}k(\bzeta)\sum_{l=3}^\infty\frac{(-1)^{l-1}}{l\beta^{(l-2)/2}}\Big(\frac{\bzeta\cdot\bx}{\bzeta\cdot\bz}\Big)^l\Big]\intd^N \bx.
\end{align}
For the second line, we have defined the Hessian matrix by
\begin{equation}
\mb{H}=\sum_{\bzeta\in R_+}k(\bzeta)\frac{\bzeta\bzeta^T}{(\bzeta\cdot\bz)^2}+\mb{I},
\end{equation}
with the superscript $T$ denoting transposition and $\mb{I}$ denoting the $N\times N$ identity matrix; the expression follows from cancelling the term $-\sqrt{\beta}\bx\cdot\bz$ by using \eqref{eq:peakvector} in the first term of the expansion \eqref{eq:logexpansionw}, leaving the second term in the expansion to form $\mb{H}$.

Because of the multivariate centered Gaussian in the integral and the symmetry of $\mcl{I}(\beta)$, all antisymetric terms vanish (these include the orders $\beta^{-1/2}$ and $\beta^{-3/2}$), so expanding up to terms of order $\beta^{-1}$ gives
\begin{align}
I_1=&\int_{\mcl{I}(\beta)}\text{e}^{-\bx^T\mb{H}\bx/2}\Big[1-\frac{1}{4\beta}\sum_{\bzeta\in R_+}k(\bzeta)\Big(\frac{\bzeta\cdot\bx}{\bzeta\cdot\bz}\Big)^4+\frac{1}{\beta}\Big(\frac{\balpha\cdot\bx}{\balpha\cdot\bz}\Big)^2\notag\\
&\qquad+\frac{1}{2\beta}\Big(2\frac{\balpha\cdot\bx}{\balpha\cdot\bz}-\frac{1}{3}\sum_{\bzeta\in R_+}k(\bzeta)\Big[\frac{\bzeta\cdot\bx}{\bzeta\cdot\bz}\Big]^3\Big)^2+O(\beta^{-2})\Big]\intd^N \bx.
\end{align}

Let us focus on the region $\bx\in C_W(\sqrt{\beta}\bz)\setminus\mcl{I}(\beta)$ now. We can rewrite the integrand in \eqref{eq:mainintegral} as one exponential and simplify it by using the inequality $x>\log(1+x)$ as well as \eqref{eq:peakvector},
\begin{align}
\exp\Big\{&-\frac{\|\bx\|^2}{2}-\sqrt{\beta}\bz\cdot\bx+\beta\sum_{\bzeta\in R_+}k(\bzeta)\log\Big(1+\frac{\bzeta\cdot\bx}{\sqrt{\beta}\bzeta\cdot\bz}\Big)-2\log\Big(1+\frac{\balpha\cdot\bx}{\sqrt{\beta}\balpha\cdot\bz}\Big)\Big\}\notag\\
&\leq \exp\Big\{-\frac{1}{2}\Big\|\bx+\frac{2\balpha}{\sqrt{\beta}\balpha\cdot\bz}\Big\|^2+\frac{2\|\balpha\|^2}{\beta(\balpha\cdot\bz)^2}\Big\}.
\end{align}
With this, we can write
\begin{align}
I_2&:=\int_{C_W(\sqrt{\beta}\bz)\setminus\mcl{I}(\beta)}\frac{\text{e}^{-\|\bx\|^2/2-\sqrt{\beta}\bx\cdot\bz}}{[1+\frac{\balpha\cdot\bx}{\sqrt{\beta}\balpha\cdot\bz}]^2}\prod_{\bzeta\in R_+}\Big(1+\frac{\bzeta\cdot\bx}{\sqrt{\beta}\bzeta\cdot\bz}\Big)^{\beta k(\bzeta)}\intd^N \bx\notag\\
&\leq\int_{C_W(\sqrt{\beta}\bz)\setminus\mcl{I}(\beta)}\exp\Big\{-\frac{1}{2}\Big\|\bx+\frac{2\balpha}{\sqrt{\beta}\balpha\cdot\bz}\Big\|^2+\frac{2\|\balpha\|^2}{\beta(\balpha\cdot\bz)^2}\Big\}\intd^N \bx\notag\\
&\leq\int_{\substack{\bx:\|\bx\|>\sqrt{\beta\gamma}r^\star\\ \bx\cdot\bz>0}}\exp\Big\{-\frac{1}{2}\Big\|\bx+\frac{2\balpha}{\sqrt{\beta}\balpha\cdot\bz}\Big\|^2+\frac{2\|\balpha\|^2}{\beta(\balpha\cdot\bz)^2}\Big\}\intd^N \bx,
\end{align}
with $r^\star$ as in the statement. Finally, there exists a constant $C>0$ such that
\begin{equation}
I_2\leq \frac{C \varphi_N}{2}\int_{\sqrt{\beta\gamma}r^\star}^\infty \text{e}^{-x^2/2}x^{N-1}\intd x=O(\text{e}^{-\beta\gamma r^\star/2})=O(\mnm{e}^{-M}),
\end{equation}
where $\beta>2M/\gamma r^\star$. $\varphi_N$ represents the total solid angle in $N$ dimensions. In a similar manner, we can find the asymptotics of the integral
\begin{align}
\frac{c_\beta}{|W|}&=\int_{C_W}\text{e}^{-\|\bx\|^2/2}w_\beta(\bx)\intd^N\bx\notag\\
&=\Big(\frac{\beta}{\text{e}}\Big)^{\beta\gamma/2}w_\beta(\bz)\int_{C_W(\sqrt{\beta}\bz)}\text{e}^{-\|\bx\|^2/2-\sqrt{\beta}\bx\cdot\bz}\prod_{\bzeta\in R_+}\Big(1+\frac{\bzeta\cdot\bx}{\sqrt{\beta}\bzeta\cdot\bz}\Big)^{\beta k(\bzeta)}\intd^N \bx\notag\\
&=\Big(\frac{\beta}{\text{e}}\Big)^{\beta\gamma/2}w_\beta(\bz)\ (I_3+I_4),
\end{align}
with $I_3$ and $I_4$ denoting the integrals in regions $\mcl{I}(\beta)$ and $C_W(\sqrt{\beta}\bz)\setminus\mcl{I}(\beta)$, respectively. Then, we see that
\begin{equation}
I_3=\int_{\mcl{I}(\beta)}\text{e}^{-\bx^T\mb{H}\bx/2}\Big[1-\frac{1}{4\beta}\sum_{\bzeta\in R_+}k(\bzeta)\Big(\frac{\bzeta\cdot\bx}{\bzeta\cdot\bz}\Big)^4+O(\beta^{-2})\Big]\intd^N\bx
\end{equation}
and
\begin{equation}
I_4\leq \frac{\varphi_N}{2}\int_{\sqrt{\beta\gamma}r^\star}^\infty \text{e}^{-x^2/2}x^{N-1}\intd x=O(\text{e}^{-\beta\gamma r^\star/2}).
\end{equation}
Let us set
\[\tilde{c}_\beta:=\int_{\mcl{I}(\beta)}\text{e}^{-\bx^T\mb{H}\bx/2}\intd^N\bx>0\]
in order to write
\begin{align}
\lambda_{\balpha}(\beta)&=\frac{\|\balpha\|^2k(\balpha)}{4(\balpha\cdot\bz)^2}\frac{I_1+I_2}{I_3+I_4}\notag\\
&=\frac{\|\balpha\|^2k(\balpha)}{4(\balpha\cdot\bz)^2}\Big[1+\frac{1}{\beta\tilde{c}_\beta}\int_{\mcl{I}(\beta)}\text{e}^{-\bx^T\mb{H}\bx/2}\Big[\Big(\frac{\balpha\cdot\bx}{\balpha\cdot\bz}\Big)^2\notag\\
&\qquad+\frac{1}{2}\Big(2\frac{\balpha\cdot\bx}{\balpha\cdot\bz}-\frac{1}{3}\sum_{\bzeta\in R_+}k(\bzeta)\Big[\frac{\bzeta\cdot\bx}{\bzeta\cdot\bz}\Big]^3\Big)^2\Big]\intd^N \bx+O(\beta^{-2})+O(\mnm{e}^{-M})\Big]\notag\\
&=\frac{\|\balpha\|^2k(\balpha)}{4(\balpha\cdot\bz)^2}\Big[1+\frac{\tilde{C}_\beta(\balpha)}{\beta}+O(\beta^{-2})+O(\mnm{e}^{-M})\Big].
\end{align}
The final line is obtained by expanding the denominator in powers of $\beta$. The constant $\tilde{C}_\beta(\balpha)$ is given by
\begin{align}
\tilde{C}_\beta(\balpha):=\frac{1}{\tilde{c}_\beta}\int_{\mcl{I}(\beta)}&\text{e}^{-\bx^T\mb{H}\bx/2}\Big[\Big(\frac{\balpha\cdot\bx}{\balpha\cdot\bz}\Big)^2\notag\\
&+\frac{1}{2}\Big(2\frac{\balpha\cdot\bx}{\balpha\cdot\bz}-\frac{1}{3}\sum_{\bzeta\in R_+}k(\bzeta)\Big[\frac{\bzeta\cdot\bx}{\bzeta\cdot\bz}\Big]^3\Big)^2\Big]\intd^N \bx>0,
\end{align}
and finally
\begin{align}
\tilde{C}(\balpha):=\sqrt{\frac{\det(\mb{H})}{(2\pi)^N}}\int_{\mbb{R}^N}&\text{e}^{-\bx^T\mb{H}\bx/2}\Big[\Big(\frac{\balpha\cdot\bx}{\balpha\cdot\bz}\Big)^2\notag\\
&+\frac{1}{2}\Big(2\frac{\balpha\cdot\bx}{\balpha\cdot\bz}-\frac{1}{3}\sum_{\bzeta\in R_+}k(\bzeta)\Big[\frac{\bzeta\cdot\bx}{\bzeta\cdot\bz}\Big]^3\Big)^2\Big]\intd^N \bx +O(\mnm{e}^{-M}).
\end{align}
The statement follows.
\end{proof}

The large-$\beta$ behavior of the transition coefficients gives us insight on the behavior of the relaxation exponents in the same regime: when $\beta$ is large but finite the relaxation exponents are larger in magnitude than their value when $\beta\to\infty$.

\begin{proof}[Proof of  Theorem~\ref{th:perturbation}]
It suffices to perform a first-order perturbation on the eigenfunctions and eigenvalues of the matrix $M_{\tau\nu}=M_{\tau\nu}(\beta)$. We assume that $\beta>C(M)=2M/\gamma r^\star$ in order to make use of Lemma~\ref{lm:coefficientasymptotics}, and we denote the eigenvalues of $\lim_{\beta\to\infty}M_{\tau\nu}(\beta)$ by $\{r_i^{(0)}\}_i$ in descending order. First, we note that, since \eqref{eq:transitionratecoefficientssum} holds for all $t$ and $\beta>1$, we can evaluate it at $\bx_0=\mb{0}$ and $t=1$, and use Lemma~\ref{lm:coefficientasymptotics} and the expansion
\begin{equation}
\frac{\beta|R_+|}{4(\beta-1)}=\frac{|R_+|}{4}\sum_{n=0}^{\infty}\beta^{-n}
\end{equation}
to equate the coefficients of $\beta^{-1}$ and obtain (abbreviating the superscript of the peak vector $\bz^{(R)}$)
\begin{equation}\label{eq:sumctilde}
\sum_{\balpha\in R_+}\frac{\|\balpha\|^2\tilde{C}(\balpha)}{4(\balpha\cdot\bz)^2}=\frac{|R_+|}{4}.
\end{equation}
Let us write the eigenvectors of $M_{\tau\nu}(\beta)$ as
\begin{equation}
\phi_i(\beta;\tau)=\sum_{n=0}^{\infty}\beta^{-n}\phi_i^{(n)}(\tau).
\end{equation}
They form an orthonormal basis because $M_{\tau\nu}(\beta)$ is a symmetric matrix, and we write their respective eigenvalues as
\begin{equation}
r_i(\beta)=\sum_{n=0}^{\infty}\beta^{-n}r_i^{(n)}.
\end{equation}
We note that the $\{\phi_i^{(0)}(\tau)\}_{i=1}^{|W|}$ form an orthonormal basis themselves, as they are the eigenvector basis that diagonalizes $M_{\tau\nu}(\beta)$ when $\beta\to\infty$.
Inserting these expressions and the result of Lemma~\ref{lm:coefficientasymptotics} into the eigenvalue equation
\[\sum_{\nu\in W}M_{\tau\nu}(\beta)\phi_i(\beta;\nu)=r_i(\beta)\phi_i(\beta;\nu),\]
and equating the coefficients of $\beta^{-1}$ yields
\begin{align}
\sum_{\balpha\in R_+}\frac{\|\balpha\|^2}{4(\balpha\cdot\bz)^2}&[\tilde{C}(\balpha)\phi_i^{(0)}(\tau\sigma_{\balpha})+\phi_i^{(1)}(\tau\sigma_{\balpha})]-\frac{|R_+|}{4}[\phi_i^{(0)}(\tau)+\phi_i^{(1)}(\tau)]\notag\\
&=r_i^{(0)}\phi_i^{(1)}(\tau)+r_i^{(1)}\phi_i^{(0)}(\tau).\label{eq:perturbationequation}
\end{align}
We solve for $r_i^{(1)}$ by multiplying $\sum_{\tau\in W}\phi_i^{(0)}(\tau)$ from the left, as well as making use of the relation
\begin{align}
\sum_{\balpha\in R_+}&\frac{\|\balpha\|^2}{4(\balpha\cdot\bz)^2}\sum_{\tau\in W}\phi_i^{(0)}(\tau)\phi_i^{(1)}(\tau\sigma_{\balpha})-\frac{|R_+|}{4}\sum_{\tau\in W}\phi_i^{(0)}(\tau)\phi_i^{(1)}(\tau)\notag\\
&=\sum_{\balpha\in R_+}\frac{\|\balpha\|^2}{4(\balpha\cdot\bz)^2}\sum_{\tau\in W}\phi_i^{(0)}(\tau\sigma_{\balpha})\phi_i^{(1)}(\tau)-\frac{|R_+|}{4}\sum_{\tau\in W}\phi_i^{(0)}(\tau)\phi_i^{(1)}(\tau)\notag\\
&=\sum_{\tau\in W}\phi_i^{(1)}(\tau)\Big[\sum_{\balpha\in R_+}\frac{\|\balpha\|^2}{4(\balpha\cdot\bz)^2}\sum_{\tau\in W}\phi_i^{(0)}(\tau\sigma_{\balpha})-\frac{|R_+|}{4}\phi_i^{(0)}(\tau)\Big]\notag\\
&=r_i^{(0)}\sum_{\tau\in W}\phi_i^{(1)}(\tau)\phi_i^{(0)}(\tau).
\end{align}
(The second equality is obtained by performing the substitution $\tau\sigma_{\balpha}\to\tau$.) Then, \eqref{eq:perturbationequation} becomes
\begin{equation}
r_i^{(1)}=\sum_{\balpha\in R_+}\frac{\|\balpha\|^2\tilde{C}(\balpha)}{4(\balpha\cdot\bz)^2}\sum_{\tau\in W}\phi_i^{(0)}(\tau)\phi_i^{(0)}(\tau\sigma_{\balpha})-\frac{|R_+|}{4}.
\end{equation}
We use the orthonormality of the $\{\phi_i^{(0)}\}_{i=1}^{|W|}$ and \eqref{eq:sumctilde} to complete the square and obtain
\begin{equation}
r_i^{(1)}=-\frac{1}{8}\sum_{\balpha\in R_+}\frac{\|\balpha\|^2\tilde{C}(\balpha)}{(\balpha\cdot\bz)^2}\sum_{\tau\in W}[\phi_i^{(0)}(\tau)-\phi_i^{(0)}(\tau\sigma_{\balpha})]^2.
\end{equation}
Because $\tilde{C}(\balpha)$ and the rest of the terms in the sum are positive, it follows that $r_i^{(1)}<0$.
\end{proof}

This result gives formal footing to the intuition that the relaxation should be faster at finite $\beta$, and that in the large-$\beta$ regime the lower bound is given by the least negative nonzero eigenvalue of $M_{\tau\nu}(\beta)$ at $\beta\to\infty$.

\section{Relaxation for the case $A_{N-1}$}\label{sec:relaxationA}

The case $A_{N-1}$ is of particular interest because of its relationship with the Dyson model. Here, we consider the relaxation of the jump process in terms of the least negative eigenvalue and we draw a connection to PF spin chains \cite{Polychronakos92, Frahm93} in the freezing regime, namely, the limit $\beta\to\infty$.

Let us examine the freezing regime first. It is well known that the peak vector (or Fekete set \cite{Fekete23}) for $A_{N-1}$ is given by the zeroes of the $N$-th Hermite polynomial $H_N(x)$, $\mb{z}=(z_1,\ldots, z_N)$, in ascending order. Here, the Hermite polynomials are defined as the family of polynomials $\{H_n(x)\}_{n=0}^\infty$ that are orthogonal in $\mbb{R}$ with respect to $\exp(-x^2)$ \cite{Szego}. We denote the roots of the positive subsystem by $\balpha_{j,i}=\mb{e}_j-\mb{e}_i$, $i<j$. Then, the jump rates are given by
\begin{equation}
\lambda_{\beta\to\infty}(1,\balpha_{j,i}|\mb{0})=\frac{1}{2(z_j-z_i)^2},
\end{equation}
and, adopting the notation in Theorem~\ref{th:perturbation} where the superscript $(0)$ indicates the limit $\beta\to\infty$ of the corresponding quantity, the action of the transition matrix in the master equation on an arbitrary function $f(\tau)$ becomes
\begin{equation}\label{eq:eigenvaluefreezingA}
\sum_{\nu\in S_N}M_{\tau\nu}^{(0)}f(\nu)=\frac{1}{2}\sum_{1\leq i<j\leq N}\frac{f(\tau\sigma_{j,i})}{(z_j-z_i)^2}-\frac{N (N-1)}{8}f(\tau).
\end{equation}
Here, we have set $\sigma_{j,i}:=\sigma_{\balpha_{j,i}}$ for brevity. As in Lemma~\ref{lm:minimumeigvl}, let us define the matrix $\tilde{M}_{\tau\nu}^{(0)}$ by
\begin{equation}
\sum_{\nu\in S_N}\tilde{M}_{\tau\nu}^{(0)}f(\nu)=\frac{1}{2}\sum_{1\leq i<j\leq N}\frac{f(\tau\sigma_{j,i})}{(z_j-z_i)^2}.
\end{equation}
This expression is nothing but the Hamiltonian of a PF spin chain with $N$ spins multiplied by one half. To see this, we introduce the following bra-ket notation: given a family of $N$-dimensional Hilbert spaces $\{\mcl{H}_i\}_{i=1}^N$ denote by $\{|n\rangle_i\}_{n=1}^N$ each of their respective canonical orthonormal bases, that is, $|n\rangle_i\in\mcl{H}_i$ with ${}_i\langle m|n\rangle_i=\delta_{m,n}$ and $\sum_{n=1}^N|n\rangle_i\langle n|_i=\mb{I}_i$. Consider the direct product of the Hilbert spaces $\bigotimes_{i=1}^N \mcl{H}_i$, and define for every multi-index $\mb{n}\in\{1,\ldots,N\}^N$ the vector
\begin{equation}
|\mb{n}\rangle:=|n_1\rangle_1\otimes\cdots\otimes|n_N\rangle_N\in\bigotimes_{i=1}^N \mcl{H}_i.
\end{equation}
We will restrict ourselves to the subspace $\tilde{\mcl{H}}\subset\bigotimes_{i=1}^N \mcl{H}_i$ spanned by the vectors $|\mb{n}\rangle$ such that there exists a permutation $\rho\in S_N$ for which $n_i=\rho(i)$. In that case, we write $|\rho\rangle:=|\mb{n}\rangle$. Using this notation, we can write functions $f:S_N\to\mbb{R}$ as vectors in $\tilde{\mcl{H}}$, namely
\[f(\rho)=\langle \rho|f \rangle\quad\Leftrightarrow\quad |f\rangle = \sum_{\nu\in S_N}f(\nu)|\nu\rangle.\]
Similarly, linear operators acting on $f$ can be expressed as operators acting on $|\rho\rangle$; for $\tau,\nu\in S_N$ and a generic matrix $A_{\tau\nu}$, we define $\hat{A}:\tilde{\mcl{H}}\to\tilde{\mcl{H}}$ by $\langle\tau|\hat{A}|\nu\rangle:=A_{\tau\nu}$ so that
\[\sum_{\nu\in S_N}A_{\tau\nu}f(\nu)=\langle\tau|\hat{A}|f\rangle.\]
Then, we can leave the matrix $\tilde{M}_{\tau\nu}^{(0)}$ aside and instead focus on its corresponding operator $\hat{M}$,
\begin{equation*}
\hat{M}=\frac{1}{2}\sum_{1\leq i<j\leq N}\frac{1}{(z_j-z_i)^2}\sum_{\tau\in S_N}|\tau\sigma_{j,i}\rangle\langle\tau|,
\end{equation*}
which naturally has the same spectrum and also maps $\tilde{\mcl{H}}$ onto itself. The action of the operator in the sum over $\tau$ is that of exchanging the $i$-th and $j$-th components of $|\tau\rangle$, namely $\tau(i)$ and $\tau(j)$, so we can rewrite this as simply
\begin{equation}
\hat{M}=\frac{1}{2}\sum_{1\leq i<j\leq N}\frac{\hat{P}_{j,i}}{(z_j-z_i)^2}.
\end{equation}
This is the usual way in which the PF spin chain Hamiltonian is written. The spectrum of this operator is calculated by regarding each of the subspaces $\mcl{H}_i$ as an $SU(N)$ spin and introducing the $N^2-1$ matrices of the $\mfk{su}(N)$ algebra,
\begin{equation}
\hat{J}^{(j,l)}_i=
\begin{cases}
\frac{1}{2}(|j\rangle_i\langle l|_i+|l\rangle_i\langle j|_i), & \text{ if }1\leq j<l\leq N,\\
\frac{\sqrt{-1}}{2}(|l\rangle_i\langle j|_i-|j\rangle_i\langle l|_i), & \text{ if }1\leq l<j\leq N,\\
\frac{1}{\sqrt{2j(j+1)}}(\sum_{m=1}^j|m\rangle_i\langle m|_i-j|j+1\rangle_i\langle j+1|_i), & \text{ if }1\leq j=l\leq N-1,\\
0,&\text{ if }j=l=N.
\end{cases}
\end{equation}
Let us keep in mind that we are using the following shorthand notation
\[\hat{J}^{(j,l)}_i=\hat{I}_1\otimes\cdots\hat{I}_{i-1}\otimes\hat{J}^{(j,l)}_i\otimes\hat{I}_{i+1}\otimes\cdots\hat{I}_{N}\]
to let the subscript indicate the subspace on which $\hat{J}^{(j,l)}_i$ operates, with $\hat{I}_n$ denoting the identity operator acting on $\mcl{H}_n$; then
\[\hat{J}^{(j,l)}_i|\rho\rangle=|\rho(1)\rangle\otimes\cdots\otimes\hat{J}^{(j,l)}_i|\rho(i)\rangle_i\otimes\cdots\otimes|\rho(N)\rangle_N.\]
Finally, we set the convention that all subspaces of index different from any subscript in an operator are to be left intact in general. Then, it can be shown that $\hat{P}_{m,n}$ can be expressed as (see Appendix~\ref{sec:appendix})
\begin{equation}\label{eq:exchangeop}
\hat{P}_{m,n}=\frac{1}{N}\hat{I}_m\otimes\hat{I}_n+2\sum_{j,l}\hat{J}^{(j,l)}_m\otimes\hat{J}^{(j,l)}_n.
\end{equation}
Finally, recall that
\begin{align}\label{eq:propertiesofj}
\Tr[\hat{J}^{(j_1,l_1)}_i,\hat{J}^{(j_2,l_2)}_i]&=\frac{1}{2}\delta_{j_1,j_2}\delta_{l_1,l_2}\quad\text{and}\notag\\
[\hat{J}^{(j_1,l_1)}_i,\hat{J}^{(j_2,l_2)}_i]&=\sum_{j_3,l_3}f^{[(j_1,l_1),(j_2,l_2),(j_3,l_3)]}\hat{J}^{(j_3,l_3)}_i,
\end{align}
where $[\cdot,\cdot]$ denotes the commutator and the $f^{[(j_1,l_1),(j_2,l_2),(j_3,l_3)]}$ denote the structure constants.
With all the definitions in place, we can define ladder operators in order to produce the spectrum of $\hat{M}$. Let us define operators $\hat{K}^{(j,l)},\hat{L}^{(j,l)}$ that map $\bigotimes_{i=1}^N \mcl{H}_i$ onto itself by
\begin{align}
\hat{K}^{(j,l)}&:=\sum_{m=1}^N z_m\hat{J}^{(j,l)}_m\quad\text{and}\\
\hat{L}^{(j,l)}&:=\sum_{\substack{(j_1,l_1)\\(j_2,l_2)}}\sum_{1\leq m\neq n\leq N}\frac{f^{[(j,l),(j_1,l_1),(j_2,l_2)]}}{z_m-z_n}\hat{J}^{(j_1,l_1)}_m\hat{J}^{(j_2,l_2)}_n.
\end{align}
Do note that $\hat{K}^{(j,l)}$ and $\hat{L}^{(j,l)}$ do not map $\tilde{\mcl{H}}$ onto itself in general. We check in Appendix~\ref{sec:appendix} that
\begin{equation}\label{eq:klmcommutators}
[\hat{M},\hat{K}^{(j,l)}]=\frac{1}{2}\hat{L}^{(j,l)}\quad\text{and}\quad[\hat{M},\hat{L}^{(j,l)}]=\frac{1}{2}\hat{K}^{(j,l)}.
\end{equation}
Then, for every eigenvector $|\phi_m\rangle$ with eigenvalue $\tilde{r}^{(0)}_m$, it follows that 
\begin{equation}
\hat{M}(\hat{K}^{(j,l)}\pm\hat{L}^{(j,l)})|\phi_m\rangle=(\tilde{r}^{(0)}_m\pm1/2)(\hat{K}^{(j,l)}\pm\hat{L}^{(j,l)})|\phi_m\rangle.
\end{equation}
Finally, we recover the eigenvalues of $\tilde{M}_{\tau\nu}^{(0)}$ from \eqref{eq:eigenvaluefreezingA} by writing
\begin{equation}
r_i^{(0)}=\tilde{r}_i^{(0)}-\frac{N (N-1)}{8}.
\end{equation}
By \eqref{eq:negsemidef}, we know that the maximum eigenvalue of $M_{\tau\nu}^{(0)}$ is zero, so we can use an annihilation operator $\hat{K}^{(j,l)}-\hat{L}^{(j,l)}$ to find that the least negative nonzero eigenvalue is $-1/2$. However, we cannot use an arbitrary annihilation operator because we require that the result of operating $\hat{K}^{(j,l)}-\hat{L}^{(j,l)}$ on the eigenvector $\sum_{\rho\in S_N}(N!)^{-1/2}|\rho\rangle$ remain in $\tilde{\mcl{H}}$. We show in the Appendix that this requirement is only satisfied by annihilation operators such that $j=l$, meaning that there exist exactly $N-1$ eigenvectors with eigenvalue $-1/2$.

From the result in the previous section, we find that the scaling exponent of relaxation is given by
\begin{equation}
r_{1, A}=-\Big(\frac{1}{2}+\frac{|r_{1, A}^{(1)}|}{\beta}\Big)+O(\beta^{-2})
\end{equation}
when $\beta>C(M)$ as in Theorem~\ref{th:perturbation}, provided $|r_{1,A}|<1$.

\section{Concluding remarks}\label{sec:remarks}

We obtained the master equation that governs the dynamics of Dunkl jump processes when $\beta>1$ and all the multiplicities $k(\balpha)\geq 1$. This parameter condition was critical in the description of the jump processes as Poisson random walks on the Weyl group $W$, as it provided the basis for the proof of Lemma~\ref{jumplemma}. However, the dynamics of the process when $\beta<1$ must be handled with greater care due to the fact that the rate functions in \eqref{eq:jumprates} diverge and our approach breaks down.

For the cases $A_{N-1}$ and $B_N$, which have a corresponding stochastic particle system representation, we reported a phase transition that appears as a singularity for $\beta_c=1$ in the jump rate per particle in the bulk scaling limit $N\to\infty$ with $t=N$. While it is well-known that $\beta_c=1$ is the value of $\beta$ that separates the colliding ($\beta<\beta_c$) and non-colliding ($\beta>\beta_c$) characteristics of these particle systems, no other physical insight had been given until now. Given the form found in Theorem~\ref{th:phasetransition}, it is clear that the critical exponent in the ordered ($\beta>\beta_c$) phase is equal to one. However, it is unknown whether a similar result can be obtained in the disordered ($\beta<\beta_c$) phase. In addition, the bulk scaling limit was taken at the level of the jump rate per particle directly, so we do not have any further specific information about the jump process behavior in the limit $N\to\infty$; this is a problem that warrants further investigation.

We examined the $\beta$-dependence of the relaxation exponent and we found that the initial configuration $\bx_0$ plays an important role in the long-time dynamics. In particular, we found that the relaxation exponent magnitude has an upper bound of 1 which is imposed by the effect of $\bx_0$ on the jump rates. Moreover, we found that this exponent changes at large values of $\beta$ whenever the case $\bx_0=\mb{0}$ at $\beta\to\infty$ yields an exponent of magnitude less than one. It is interesting to note that it was reported in \cite{AndrausMiyashita15} that the relaxation asymptotics due to the jumps in a Dunkl process were of order $t^{-1/2}$, which is a correction derived from the power series expansion of the Dunkl kernel. We have shown here that these asymptotics are a worst-case scenario, as the fastest jump process relaxation behaves like $t^{-1}$ according to Theorem~\ref{th:relaxation}. This leads us to suspect that the relaxation exponent satisfies $-1\leq r_1(\beta)\leq-1/2$ in general. This seems consistent with the case $A_{N-1}$, where we found that $r_{1,A}(\beta)\uparrow-1/2$ when $\beta\to\infty$. While we do expect $r_1(\beta)$ to be a non-increasing function of $\beta$ in general, we do not have proof of this yet.

Three open questions remain after the present work. An important one is describing the jump process when $0<\beta<1$, as in that regime the description given here breaks down. This implies that the jump process must be approached using a different technique, and in particular, that taking expectation values with respect to the law of the underlying radial process must be avoided.

Another open problem is calculating each of the jump rates as functions of $\beta>1$. Computing the total jump rate for $\bx_0=\mb{0}$ was straightforward because it is the integral of a $W$-invariant function. However, the individual jump rates are integrals of non-$W$-invariant functions, which makes their calculation substantially more difficult, even when $\bx_0=\mb{0}$. We suspect that there may be indirect methods for calculating them, and we would like to make progress in this direction.

Finally, we mentioned that in general the relaxation exponent is decreasing in absolute value for sufficiently large $\beta$, and that in particular it falls toward 1/2 in the case $A_{N-1}$. This decreasing behavior appeared from the eigenvalue problem for the case $\bx_0=\mb{0}$; we suspect that these eigenvalues are decreasing for $\beta>1$. If we could prove this, it would imply the existence of a cross-over dictated by a well-defined value of $\beta$ under which the relaxation exponent is 1 and above which the exponent is less than 1. We are currently considering this problem, and hope to report on it in the near future.

\section*{Acknowledgments}
The author would like to thank Makoto Katori for helpful discussions and comments, and an anonymous referee for several recommendations that led to improvements in the present work. The author is supported by JSPS KAKENHI Grant Number JP19K14617. 

\appendix

\section{}\label{sec:appendix}

First, we provide a proof of \eqref{eq:exchangeop}, followed by a derivation of the relationships \eqref{eq:klmcommutators}. After that, we show that the only ladder operators which map $\tilde{\mcl{H}}$ into itself must be of the form $\hat{K}^{(j,j)}\pm\hat{L}^{(j,j)}$, $1\leq j\leq N-1$.

Without loss of generality, we can consider the case where $\mcl{H}=\mcl{H}_1\otimes\mcl{H}_2$, and each of the  $\mcl{H}_i$ is of dimension $N$. Then, we consider the action of
\begin{equation}
\hat{P}_{1,2}=\frac{1}{N}\hat{I}_1\otimes\hat{I}_2+2\sum_{1\leq j,l\leq N}\hat{J}^{(j,l)}_1\otimes\hat{J}^{(j,l)}_2
\end{equation}
on the vector
\begin{equation}
|m,n\rangle:=|m\rangle_1\otimes|n\rangle_2.
\end{equation}
Since we are dealing with permutations, we can also impose $m<n$. Henceforth, we abbreviate the direct product symbols for simplicity. Let us denote by $\mbb{I}_{\eta}$ the indicator function, that is, $\mbb{I}_{\eta}=1$ if the expression $\eta$ is true, and zero if it is false. Then, for $j=l$ we have
\begin{align}
\sum_{j=1}^{N-1}\hat{J}^{(j,j)}_1\hat{J}^{(j,j)}_2|m,n\rangle&=\sum_{j=1}^{N-1}\frac{1}{2j(j+1)} [\mbb{I}_{j\geq n}-j\delta_{j+1,n}\mbb{I}_{j\geq m}]|m,n\rangle\notag\\
&=\frac{1}{2}\Big[\sum_{j=n}^{N-1}\frac{1}{j(j+1)}-\frac{1}{n}\Big]|m,n\rangle\notag\\
&=\frac{1}{2}\Big[\sum_{j=n+1}^{N-1}\frac{1}{j(j+1)}-\frac{1}{n+1}\Big]|m,n\rangle\notag\\
&\qquad\vdots\notag\\
&=\frac{1}{2}\Big[\frac{1}{N(N-1)}-\frac{1}{N-1}\Big]|m,n\rangle=-\frac{1}{2N}|m,n\rangle.
\end{align}
Similarly, for $j<l$ we get
\begin{align}
\sum_{1\leq j<l\leq N}\hat{J}^{(j,l)}_1\hat{J}^{(j,l)}_2|m,n\rangle=&\frac{1}{4}\sum_{1\leq j<l\leq N}(|j,l\rangle\langle l,j|+|l,j\rangle\langle j,l|)|m,n\rangle\notag\\
=&\frac{1}{4}|n,m\rangle,
\end{align}
and the case $j>l$ gives
\begin{align}
\sum_{1\leq l<j\leq N}\hat{J}^{(j,l)}_1\hat{J}^{(j,l)}_2|m,n\rangle=&-\frac{1}{4}\sum_{1\leq l<j\leq N}(-|l,j\rangle\langle j,l|-|j,l\rangle\langle l,j|)|m,n\rangle\notag\\
=&\frac{1}{4}|n,m\rangle.
\end{align}
Adding all of the expressions gives
\begin{align}
\hat{P}_{1,2}|m,n\rangle&=\frac{1}{N}|m,n\rangle+2\Big[\frac{1}{4}|n,m\rangle+\frac{1}{4}|n,m\rangle-\frac{1}{2N}|m,n\rangle\Big]=|n,m\rangle,
\end{align}
as desired.

The first expression in \eqref{eq:klmcommutators} is derived through a direct calculation. Inserting the definitions of $\hat{M}$ and $\hat{K}^{(j,l)}$ into the commutator gives
\begin{align}
[\hat{M},\hat{K}^{(j,l)}]=&\frac{1}{2}\sum_{m=1}^N\sum_{1\leq i<k\leq N}\frac{z_m}{(z_k-z_i)^2} [\hat{P}_{k,i},\hat{J}^{(j,l)}_m]\notag\\
=&\frac{1}{2}\sum_{m=1}^N\sum_{1\leq i<k\leq N}\frac{z_m}{(z_k-z_i)^2}\Big(\frac{1}{N}[\hat{I}_k\hat{I}_i,\hat{J}^{(j,l)}_m]\notag\\
&\qquad\qquad\qquad+2\sum_{j_1,l_1}[\hat{J}_k^{(j_1,l_1)}\hat{J}_i^{(j_1,l_1)},\hat{J}_m^{(j,l)}]\Big)\notag\\
=&\sum_{m=1}^N\sum_{1\leq i<k\leq N}\sum_{j_1,l_1}\frac{z_m}{(z_k-z_i)^2}\Big(\delta_{m,i}\hat{J}_k^{(j_1,l_1)}[\hat{J}_i^{(j_1,l_1)},\hat{J}_i^{(j,l)}]\notag\\
&\qquad\qquad\qquad+\delta_{m,k}\hat{J}_i^{(j_1,l_1)}[\hat{J}_k^{(j_1,l_1)},\hat{J}_k^{(j,l)}]\Big)\notag\\
=&\sum_{1\leq i<k\leq N}\frac{z_i}{(z_k-z_i)^2}\sum_{j_1,l_1}\hat{J}_k^{(j_1,l_1)}[\hat{J}_i^{(j_1,l_1)},\hat{J}_i^{(j,l)}]\notag\\
&+\sum_{1\leq i<k\leq N}\frac{z_k}{(z_k-z_i)^2}\sum_{j_1,l_1}\hat{J}_i^{(j_1,l_1)}[\hat{J}_k^{(j_1,l_1)},\hat{J}_k^{(j,l)}]\notag\\
=&\sum_{1\leq i<k\leq N}\frac{z_i}{(z_k-z_i)^2}\sum_{j_1,l_1}\sum_{j_2,l_2}f^{[(j_1,l_1),(j,l),(j_2,l_2)]}\hat{J}_k^{(j_1,l_1)}\hat{J}^{(j_2,l_2)}_i\notag\\
&+\sum_{1\leq i<k\leq N}\frac{z_k}{(z_k-z_i)^2}\sum_{j_1,l_1}\sum_{j_2,l_2}f^{[(j_1,l_1),(j,l),(j_2,l_2)]}\hat{J}_i^{(j_1,l_1)}\hat{J}_k^{(j_2,l_2)}\notag\\
=&\sum_{1\leq i<k\leq N}\frac{z_k-z_i}{(z_k-z_i)^2}\sum_{j_1,l_1}\sum_{j_2,l_2}f^{[(j_1,l_1),(j,l),(j_2,l_2)]}\hat{J}_i^{(j_1,l_1)}\hat{J}_k^{(j_2,l_2)}\notag\\
=&\frac{1}{2}\sum_{1\leq i\neq k\leq N}\frac{1}{z_k-z_i}\sum_{j_1,l_1}\sum_{j_2,l_2}f^{[(j_1,l_1),(j,l),(j_2,l_2)]}\hat{J}_i^{(j_1,l_1)}\hat{J}_k^{(j_2,l_2)}\notag\\
=&\frac{1}{2}\sum_{1\leq i\neq k\leq N}\frac{1}{z_i-z_k}\sum_{j_1,l_1}\sum_{j_2,l_2}f^{[(j,l),(j_1,l_1),(j_2,l_2)]}\hat{J}_i^{(j_1,l_1)}\hat{J}_k^{(j_2,l_2)}\notag\\
=&\frac{1}{2}\hat{L}^{(j,l)}.
\end{align}
For the fifth equality, we have used \eqref{eq:propertiesofj}, for the sixth, we have used the antisymmetry of the structure constants after exchanging the indices $(j_1,l_1)\leftrightarrow(j_1,l_2)$ in the first term. To obtain the seventh equality, we note that the sum is invariant when we exchange $i$ and $k$, and we end the calculation using the antisymmetry of the structure constants one last time. The second expression in \eqref{eq:klmcommutators} is obtained in a similar manner; terms involving $z_i$, $z_k$ and $z_m$ ($i\neq k\neq m$) appear, but they vanish because the resulting sum is antisymmetric in two of the indices, adding up to zero. The remaining terms involve only coefficients of the form $(z_i-z_k)^{-3}$, and making use of the following relation
\begin{equation}
z_i=\sum_{\substack{1\leq k\leq N\\k:k\neq i}}\frac{2}{(z_i-z_k)^3}
\end{equation}
for the zeroes of $H_N$ yields the desired result. This last relation is obtained by taking the second derivative of the Hermite differential equation
\begin{equation}
H_N^{\prime\prime}(x)-2xH_N^{\prime}+2NH_N(x),
\end{equation}
and setting $x=z_i$ after expressing the $N$-th Hermite polynomial in the form $H_N(x)=C_N\prod_{i=1}^N(x-z_i)$.

We finish the appendix by showing that the only ladder operators that map $\tilde{\mcl{H}}$ onto itself are of the form $\hat{K}^{(j,j)}\pm\hat{L}^{(j,j)}$. By definition, $\tilde{\mcl{H}}$ is spanned by vectors $|\rho\rangle$ with $\rho\in S_N$; the action of a generic ladder operator on $|\rho\rangle$ is given by
\begin{align}
(\hat{K}^{(j,l)}\pm\hat{L}^{(j,l)})|\rho\rangle=\sum_{m=1}^N \Big(&z_m\hat{J}^{(j,l)}_m|\rho\rangle\notag\\
&\pm\sum_{\substack{1\leq n\leq N\\n:n\neq m}}\sum_{\substack{(j_1,l_1)\\(j_2,l_2)}}\frac{f^{[(j,l),(j_1,l_1),(j_2,l_2)]}}{z_m-z_n}\hat{J}^{(j_1,l_1)}_m\hat{J}^{(j_2,l_2)}_n|\rho\rangle\Big).\label{eq:laddereffect}
\end{align}
If $j\neq l$, then $\sum_{m=1}^Nz_m\hat{J}^{(j,l)}_m$ takes $|\rho\rangle$ into a linear combination of 
\[|\rho(1),\ldots,\rho(k-1),j,\rho(k+1),\ldots,\rho(N)\rangle\text{ and }|\rho(1),\ldots,\rho(n-1),l,\rho(n+1),\ldots,\rho(N)\rangle,\]
since there exist $k$ and $n$ such that $\rho(k)=l$ and $\rho(n)=j$. However, if $\rho(k)$ is changed to $j$, for example, then there are two indices with the same value: the $k$-th index, which has been changed into $j$, and the $n$-th index, which is $\rho(n)=j$. Then, the resulting vectors are not in the span of $\{|\rho\rangle\}_{\rho\in S_N}$ and therefore not in $\tilde{\mcl{H}}$. Moreover, since the structure constants are antisymmetric, the operators $\hat{J}^{(j_1,l_1)}_m$ and $\hat{J}^{(j_2,l_2)}_n$ in the second term must be such that $(j_1,l_1)$ and $(j_2,l_2)\neq(j,l)$. This means that only one of the two index pairs can take the value $(l,j)$, and therefore the product $\hat{J}^{(j_1,l_1)}_m\hat{J}^{(j_2,l_2)}_n$ cannot reproduce the action of any of the $\hat{J}^{(j,l)}_m$ in the first term. Consequently, all ladder operators of the form $\hat{K}^{(j,l)}\pm\hat{L}^{(j,l)}$ with $j\neq l$ are ruled out.

It only remains to show that all ladder operators $\hat{K}^{(j,j)}\pm\hat{L}^{(j,j)}$ satisfy our requirements. Because $\hat{J}^{(j,j)}_m$ is a diagonal operator in $\mcl{H}_m$, the $m$-th index remains unchanged. Then, we must check that the second term in \eqref{eq:laddereffect} maps $|\rho\rangle$ to a linear combination of vectors in $\tilde{\mcl{H}}$. A direct calculation gives
\begin{align}
[\hat{J}^{(i,i)},\hat{J}^{(j,k)}]=-\sqrt{\frac{-1}{2i(i+1)}}\Big(\mbb{I}_{j\leq i}-\mbb{I}_{k\leq i}+i(\delta_{k,i+1}-\delta_{j,i+1})\Big)\hat{J}^{(k,j)}
\end{align}
for $j<k$. It follows that 
\[f^{[(i,i),(j,k),(k,j)]}=-\sqrt{\frac{-1}{2i(i+1)}}\Big(\mbb{I}_{j\leq i}-\mbb{I}_{k\leq i}+i(\delta_{k,i+1}-\delta_{j,i+1})\Big)\]
and that
\[\sum_{(j_1,k_1)}f^{[(i,i),(j,k),(j_1,k_1)]}=f^{[(i,i),(j,k),(k,j)]}.\]
Without loss of generality, let us assume that $m=1$ and $n=2$. Then, we have
\begin{align}
\sum_{(j_1,k_1)}\sum_{(j_2,k_2)}&f^{[(i,i),(j_1,k_1),(j_2,k_2)]}\hat{J}^{(j_1,k_1)}_1\hat{J}^{(j_2,k_2)}_2\notag\\
&=\sum_{1\leq j<k\leq N}f^{[(i,i),(j,k),(k,j)]}\Big(\hat{J}^{(j,k)}_1\hat{J}^{(k,j)}_2-\hat{J}^{(k,j)}_1\hat{J}^{(j,k)}_2\Big)\notag\\
&=\frac{\sqrt{-1}}{2}\sum_{1\leq j<k\leq N}f^{[(i,i),(j,k),(k,j)]}\Big(|k\rangle_1\langle j|_1\otimes|j\rangle_2\langle k|_2\bigotimes_{l=3}^N\hat{I}_l\notag\\
&\qquad\qquad\qquad\qquad\qquad\qquad\qquad\qquad-|j\rangle_1\langle k|_1\otimes|k\rangle_2\langle j|_2\bigotimes_{l=3}^N\hat{I}_l\Big).
\end{align}
This indicates that the second term in \eqref{eq:laddereffect} produces a linear combination of vectors in which indices are always exchanged. Consequently, if the ladder operators $\hat{K}^{(j,j)}\pm\hat{L}^{(j,j)}$ act on a vector indexed by a permutation, the resulting vectors are indexed by permutations as well, meaning that they are the only ladder operators that map $\tilde{\mcl{H}}$ into itself. Since there are only $N-1$ such raising and lowering operators, our claim is proved.

\end{document}